\useunder{\uline}{\ul}{}
\newcommand{\E}{\mathbb{E}}
\renewcommand{\P}{\mathrm{P}}
\newcommand{\citet}{\textcite}
\newtheorem{assumption}{Assumption}
\newtheorem{proposition}{Proposition}
\newtheorem{lemma}{Lemma}
\theoremstyle{definition}
\newtheorem{remark}{Remark}
\crefname{assumption}{Assumption}{Assumptions}
\crefname{lemma}{Lemma}{Lemmas}
\title{Evaluating Policies Early in a Pandemic: Bounding Policy Effects with Nonrandomly Missing Data\thanks{Li dedicates this paper to the memory of his late mother, Mrs. Qing Zheng, who passed away on April 21, 2020.  This paper contains some results from our earlier paper  ``Understanding the Effects of Tennessee's Open Covid-19 Testing Policy: Bounding Policy Effects with Nonrandomly Missing Data'' and replaces that paper.} \thanks{We are grateful to the editor and three referees for their insightful comments that have greatly improved the paper.  We also thank Lesley Turner, John Weymark, Jun Zhao, Lily Zhao, and seminar participants at Vanderbilt University for helpful comments.}}
\author{Brantly Callaway\thanks{Department of Economics.  University of Georgia.  \href{mailto:brantly.callaway@uga.edu}{brantly.callaway@uga.edu}} \and Tong Li\thanks{Department of Economics.  Vanderbilt University.  \href{mailto:tong.li@vanderbilt.edu}{tong.li@vanderbilt.edu}}}
\newcommand\independent{\protect\mathpalette{\protect\independenT}{\perp}}
    \def\independenT#1#2{\mathrel{\setbox0\hbox{$#1#2$}%
    \copy0\kern-\wd0\mkern4mu\box0}} 
\begin{document}

\maketitle

\abstract{During the early part of the Covid-19 pandemic, national and local governments introduced a number of policies to combat the spread of Covid-19.  In this paper, we propose a new approach to bound the effects of such early-pandemic policies on Covid-19 cases and other outcomes while dealing with complications arising from (i) limited availability of Covid-19 tests, (ii) differential availability of Covid-19 tests across locations, and (iii) eligibility requirements for individuals to be tested.  We use our approach study the effects of Tennessee's expansion of Covid-19 testing early in the pandemic and find that the policy decreased Covid-19 cases.}

\vspace{150pt}

\noindent \textbf{JEL Codes:} C21, C23, I1

\bigskip

\noindent \textbf{Keywords:} Policy Evaluation, Partial Identification, Covid-19, Pandemic, Covid-19 Testing

\vspace{70pt}

\pagebreak

\doublespacing

\normalsize

\addtolength{\jot}{-6pt}
\setlength{\abovedisplayskip}{\abovedisplayshortskip}
\setlength{\belowdisplayskip}{\belowdisplayshortskip}

\section{Introduction}

In response to the Covid-19 pandemic, governments implemented a large number of policies in order to try to slow the spread of Covid-19.  Most immediate policy responses were non-pharmaceutical interventions such as stay-at-home orders, school closures, mandates to wear masks, and contact tracing, among others.  In the current paper, we focus on trying to identify and estimate the effect of these sorts of policies.  We are interested in both the direct effect of these Covid-19 related policies on the number of Covid-19 cases as well as the effect of these policies on other outcomes (e.g., economic outcomes).  %

A main challenge with evaluating the effects of policies on the number of Covid-19 cases is that, especially during the early part of the pandemic when testing was not widely available, the confirmed number of cases in a particular location may have been quite different from the actual number of cases.\footnote{To be clear on the terminology, we use the phrase \textit{actual cases} or \textit{total cases} to refer to the total number of Covid-19 cases -- this is in general not observed.  We use the phrase \textit{confirmed cases} to refer to the number of positive tests for Covid-19.}  Besides limited testing early in the pandemic, another complication is that testing was non-random.   In particular (and especially early in the pandemic), individuals who were more likely to have had Covid-19 were more likely to have taken a Covid-19 test due to both self-selection as well as eligibility requirements for taking the test such as being in a particular risk group and/or exhibiting certain symptoms.  Moreover, the availability of tests varied across locations.  This creates the issue that, holding the number of cases fixed, ``more testing'' can mechanically lead to confirming more cases.  If the number of confirmed cases is treated as the actual number of cases, then these issues can lead to faulty inferences regarding the effects of policies on the actual spread of Covid-19.

In order to evaluate the effects of a Covid-19 related policy early in the pandemic, our strategy is to take treated locations and compare them to untreated locations with similar characteristics before the policy was implemented.  In particular, we compare locations that are similar along dimensions that are related to the spread of Covid-19 such as population size, per capita number of actual cases, per capita number of confirmed cases, and per capita tests that have been taken in that location.  This is an unconfoundedness-type strategy, %
and it would be a relatively straightforward exercise if testing were administered randomly; but, as discussed above, that was not the case.    In practice, this creates challenges along two dimensions.  First, ideally we would like to compare locations that had experienced a similar number of actual cases up to the time period when the policy was implemented.  In other words, for evaluating Covid-19 related policies, the actual number of cases in pre-treatment periods plays an important role as a covariate.  But conditioning on the number of pre-treatment cases is not directly possible because actual cases are not observed.  Second, understanding the effect of the policy on actual cases is of primary interest itself, which, of course, is more challenging when actual cases are not directly observed.

In the current paper, we propose a new approach for dealing with nonrandomly missing testing data.  We address the first challenge by introducing a conditional independence assumption that says that the distribution of pre-treatment actual Covid-19 cases is the same for treated and untreated locations conditional on having the same values of observed pre-treatment characteristics (the most important of which are confirmed cases, number of tests, and population size).  Under this sort of condition, the challenge of pre-treatment Covid-19 cases being an unobserved, but important, covariate is effectively solved; this further implies that, under this condition, the effect of a Covid-19 related policy on observed outcomes (such as economic outcomes) is point identified.

Addressing the second issue is more challenging.  Our intuition is that observing a policy that simultaneously decreases the number of confirmed cases while increasing (or at least not decreasing) the number of tests provides a strong indication that the policy is decreasing the actual number of Covid-19 cases in that location (even though these are not fully observed).  We formalize what extra conditions are required for this sort of intuition to hold.  In particular, the key extra condition is that the number of actual cases among the untested is not too large under the policy relative to what the number of actual cases among the untested would have been if the policy had not been implemented (note that the number of actual cases among the untested is not observed in either case); we also provide more primitive conditions for this extra condition to hold later in the paper.  Under these conditions, differences in (a scaled version of) the number of confirmed cases in a particular location relative to the number of confirmed cases among untreated locations with similar pre-treatment characteristics can be interpreted as an upper bound on the effect of the policy on the actual number of Covid-19 cases.  If this upper bound is negative, it suggests that the policy decreased the number of actual Covid-19 cases.

We apply our approach to studying the effects of Tennessee's expanded testing policy during the first few months of the pandemic.  One reason to study the effects of expanded Covid-19 testing early in the pandemic is that the costs of increased testing are substantially lower than other policy interventions such as stay-at-home orders or business closures.  However,  %
compared to other early pandemic policies, studying the effects of expanding access to Covid-19 tests is challenging due to the mechanical connections between testing and confirmed cases.  Early in the pandemic, Tennessee guaranteed payment for Covid-19 tests which resulted in a rapid expansion of testing capacity in the state.  Tennessee also implemented open-testing which allowed for any individual that wanted a test to be tested (this is in contrast to most states that, around this time, had eligibility requirements for being tested).  For observed outcomes, we find (i) the policy did notably increase the number of Covid-19 tests in Tennessee, (ii) it decreased the number of confirmed cases, and (iii) suggestive evidence that it led to a moderate increase in the number of trips to work (with the timing corresponding to the timing of the policy's reduction of confirmed cases).  For the (unobserved) per capita number of actual  Covid-19 cases, we obtain informative bounds on the effect of the policy.   We estimate that the policy reduced the number of actual Covid-19 cases in Tennessee by at least 8000 by May 9 relative to what they would have been in the absence of the policy.  Given an infection fatality rate of 1\%, this indicates that Tennessee's policy saved upwards of 80 lives during the early part of the pandemic.  %

\bigskip

\noindent \textbf{Related Work:} 

Our paper is related to a large and rapidly growing literature evaluating the effects of Covid-19 related policies.  Some examples include \citet{courtemanche-garuccio-le-pinkston-yelowitz-2020}, \citet{dave-friedson-matsuzawa-mcnichols-sabia-2020}, \citet{dave-friedson-matsuzawa-sabia-2021}, \citet{dave-friedson-matsuzawa-sabia-safford-2020}, \citet{gapen-millar-blerina-sriram-2020},  \citet{glaeser-jin-leyden-luca-2021}, \citet{goolsbee-syverson-2021}, \citet{gupta-montenovo-nguyen-rojas-schmutte-simon-weinburg-2020}, \citet{juranek-zoutman-2021}, \citet{kong-prinz-2020}, \citet{mitze-kosfeld-rode-walde-2020}, and \citet{ziedan-simon-wing-2020}, among others; \citet{haber-et-al-2022} provides a recent review of empirical strategies used to evaluate Covid-19 related policies. Work specifically on the effects of Covid-19 testing includes \citet{di-balzi-carreras-onder-2020} which compares mortality across four regions in Italy that had different testing rates, and  \citet{acemoglu-makhdoumi-malekian-ozdaglar-2020}, \citet{atkeson-droste-mina-stock-2020}, and \citet{brotherhood-kircher-santos-tertilt-2020} though these latter papers are much different than the current paper in that they follow a macroeconomic approach involving model calibration.  Other methodological work on evaluating Covid-19 related policies includes \citet{allcott-boxell-conway-ferguson-gentzkow-goldman-2020}, \citet{callaway-li-2022b}, \citet{chernozhukov-kasahara-schrimpf-2021}, \citet{aleman-busch-ludwig-santaeulalia-2020},  \citet{goodman-marcus-2020}, \citet{gauthier-2021}, and  \citet{weill-stigler-deschenes-springborn-2021}.  All of these papers are quite distinct from ours as they propose approaches that result in point identification and are not generally as concerned as we are about nonrandomly missing testing data.  More closely related to our paper is \citet{manski-molinari-2020} which proposes an approach to bound the actual number of Covid-19 cases in the presence of nonrandomly missing testing data.  Other work involving partial identification in the context of Covid-19 includes \citet{toulis-2021,stoye-2022} (for actual cases) and \citet{depalo-2021} (for excess mortality).  Our approach expands these sorts of arguments in ways that are suitable for policy evaluation and deals with several distinct complications that show up in this context.  %

On the econometrics side, our approach is related to the work on partially identifying treatment effect parameters in the presence of sample selection such as \citet{lee-2009,lechner-melly-2010}.  Like those papers, the key issues here are that the outcome (in our case, whether or not an individual has had Covid-19) is only observed for some self-selected individuals (in our case, tested individuals) and that the policy can affect both the outcome and the set of individuals for whom the outcome is observed.  Unlike those papers, our main interest is in a different parameter, %
and we invoke different assumptions to obtain bounds.  More generally, the results in this section are related to a large literature on partial identification of treatment effect parameters.  See \citet{tamer-2010,molinari-2020} for recent reviews of this literature.

\section{Methodology} \label{sec:methodology}

This section discusses our methodological approach.  We start by briefly discussing bounds on actual Covid-19 cases in a particular location.  This discussion largely follows \citet{manski-molinari-2020} and discusses (mild) assumptions that have been made in this context to bound per capita actual Covid-19 cases early in the pandemic.  This part also illustrates several of the key issues regarding non-randomly missing data.  Then, we discuss our approach to (point-) identifying Covid-19 related policy effects on observed outcomes in the case where we would like to condition on pre-treatment actual Covid-19 cases for identification.  Finally, we discuss our contribution to developing bounds on the effect of the policy on per capita actual Covid-19 cases.  In both cases, our approach is able to deal with limited/nonrandom Covid-19 testing.

\subsection{Bounding Rates of (Unobserved) Actual Covid-19 Cases} \label{sec:descriptive-bounds}

To start with, we introduce some notation.  Define $C_{ilt}$ to be a binary variable indicating whether or not individual $i$ in location $l$ had had Covid-19 by time period $t$.  Similarly, define $T_{ilt}$ and $R_{ilt}$ to be binary variables indicating whether or not individual $i$ in location $l$ had been tested ($T_{ilt}$) or tested positive ($R_{ilt}$) for Covid-19 by time period $t$.  Our first goal is descriptive:  to learn about $\P(C_{ilt}=1)$, the fraction of the population in location $l$ that has had Covid-19 by period $t$.\footnote{To be clear about the notation here, we are averaging over all individuals in a particular location $l$ at time period $t$.}

Identifying the fraction of individuals that have had Covid-19 is challenging because (i) not all individuals have been tested and some untested individuals may have had Covid-19 and (ii) testing has not been randomly assigned which implies the probability of having Covid-19 may be substantially different among tested and untested individuals.    The goal of this section is to develop informative bounds on the per capita number of actual Covid-19 cases in a particular location at a particular time under plausible identifying assumptions.  In particular, following \citet{manski-molinari-2020} notice that
\begin{align} \label{eqn:covid-bound-1}
  \P(C_{ilt}=1) = \P(C_{ilt}=1|T_{ilt}=1) \P(T_{ilt}=1) + \P(C_{ilt}=1|T_{ilt}=0) \P(T_{ilt}=0)
\end{align}
which follows immediately by the law of total probability. Next, we consider each of the terms in \Cref{eqn:covid-bound-1} individually.  First, $\P(T_{ilt}=1)$ and $\P(T_{ilt}=0)$ are the the (observed) fraction of the population in location $l$ at time period $t$ that have been tested/untested for Covid-19.
Next, $\P(C_{ilt}=1|T_{ilt}=1)$ is the fraction of the population in location $l$  at time period $t$ that has had Covid-19 conditional on being tested.  It can be re-written as
\begin{align*}
  \P(C_{ilt}=1|T_{ilt}=1) &= \P(C_{ilt}=1 | T_{ilt}=1, R_{ilt}=1)\P(R_{ilt}=1|T_{ilt}=1) \\
                         & \hspace{10pt} + \P(C_{ilt}=1 | T_{ilt}=1, R_{ilt}=0) \P(R_{ilt}=0|T_{ilt}=1) \\
                         &= \P(R_{ilt}=1|T_{ilt}=1) + \P(R_{ilt}=0|T_{ilt}=1,C_{ilt}=1) \P(C_{ilt}=1|T_{ilt}=1)
\end{align*}
where the first equality holds by the law of total probability and the second equality holds because (i) $R_{ilt} = 1 \implies T_{ilt}=1$  (i.e., in order to test positive, an individual has to be tested), (ii) we suppose that the false positive rate of the test is equal to 0 which implies that $\P(C_{ilt}=1|R_{it}=1)=1$,\footnote{The false positive rate is given by $\P(C_{ilt}=0|R_{ilt}=1)$, and there is evidence that the false positive rates for PCR tests are extremely low; see, for example \citet{sethuraman-jeremiah-ryo-2020}.} and (iii) repeated application of the definition of conditional probability for the second term.  Then, rearranging implies that
\begin{align} \label{eqn:CcondT}
  \P(C_{ilt}=1|T_{ilt}=1) = \frac{\P(R_{ilt}=1|T_{ilt}=1)}{1-FNR}
\end{align}
where $\P(R_{ilt}=1|T_{ilt}=1)$ is the (observed) fraction of tests that have come back positive in location $l$ at time period $t$ and where  $FNR := \P(R_{ilt}=0|T_{ilt}=1,C_{ilt}=1)$ is the false negative rate of the test.  This is a property of the test, and we set the false negative rate to be equal to 0.25.\footnote{\citet{manski-molinari-2020} put bounds on a closely related term called the Negative Predictive Value of the test; we could similarly put bounds on the false negative rate of the test.  We do not do this in the current paper in order to mainly focus on the bounds arising from non-random testing.  In the results presented below, in general, the bounds are not very sensitive to different reasonable values of the false negative rate of the test.}  %
Finally, $\P(C_{ilt}=1|T_{ilt}=0)$ is the (unobserved) fraction of the population that have had Covid-19 but have not been tested in location $l$ by time period $t$.  This term is the hardest to identify, and we discuss plausible assumptions that lead to bounds on this term below.\footnote{Much research studying effects of Covid-19 related policies uses confirmed cases as the outcome of interest which implicitly sets this term equal to 0; however, as discussed above, this term is unlikely to be equal to zero due to asymptomatic cases and limited testing.}

\Cref{eqn:CcondT} says that the probability of having Covid-19 conditional on being tested is increasing in the fraction of positive tests and the false negative rate of the test.  It also implies that every term in \Cref{eqn:covid-bound-1} is identified except $\P(C_{ilt}=1|T_{ilt}=0)$.  Without employing some additional assumption on this term, the bounds on the rate of actual cases are given by
\begin{align} \label{eqn:covid-bound-1b}
  \frac{\P(R_{ilt}=1)}{1-FNR} \leq \P(C_{ilt}=1) \leq \frac{\P(R_{ilt}=1)}{1-FNR} + \P(T_{ilt}=0)
\end{align}

In most cases, these sorts of bounds would be extremely wide.  For example, in our application about Tennessee's expanded testing policy, for the whole state of Tennessee on May 9, $\P(R_{ilt}=1)$ was about 0.2\% and $\P(T_{ilt}=0)$ was about 95.2\% (i.e., about 4.8\% of Tennessee's population had been tested and about 0.2\% had a positive test).  If the only restriction on $\P(C_{ilt}=1|T_{ilt}=0)$ is that it is bounded between 0 and 1, then this will lead to extremely wide bounds on Covid-19 cases (essentially uninformative).  Instead (and continuing to follow \citet{manski-molinari-2020}), we make the following assumption.

\begin{assumption}[Covid-19 Bound for Untested Individuals] \label{ass:covid-bound} \singlespacing
  \begin{align*}
    \P(C_{ilt}=1|T_{ilt}=0) \leq \P(C_{ilt}=1|T_{ilt}=1)
  \end{align*}
\end{assumption}

\Cref{ass:covid-bound} says that the fraction of individuals who have had Covid-19 (in a particular location) is lower among the group of individuals who have not been tested than among those who have been tested.  This is a mild assumption.  This assumption is likely to hold for two reasons.  First, early in the pandemic, tests were predominantly given to individuals expressing Covid-19 symptoms.  Second, even in states (or time periods) where testing was available to anyone who wanted to take a test, (i) individuals expressing symptoms were still among those most likely to take the test and (ii) it seems likely that there was some self-selection into taking the test among individuals who thought they may have Covid-19 even if they did not have the right combination of symptoms to otherwise warrant a test.  It is also helpful to think about the limiting cases of the assumption.  $\P(C_{ilt}=1|T_{ilt}=0)=0$ in the case when no untested individuals have had Covid-19.  $\P(C_{ilt}=1|T_{ilt}=0) = \P(C_{ilt}=1|T_{ilt}=1)$ if the probability of having had Covid-19 is the same for individuals who have not been tested as for individuals who have been tested.  This condition would hold if testing were randomly assigned.  In practice, either of these limiting conditions would be strong enough to point identify $\P(C_{ilt}=1)$; however, based on the above discussion, neither of these limiting conditions seems likely to hold.  Instead, \Cref{ass:covid-bound} imposes the much weaker condition that the probability of having had Covid-19 for the group of individuals who have not been tested falls in between these two limiting cases.

\Cref{ass:covid-bound} does not affect the lower bound on the number of actual cases, but it is potentially very useful in lowering the upper bound on the  number of Covid-19 cases in a particular location.  In particular, notice that under \Cref{ass:covid-bound},
\begin{align} \label{eqn:covid-bound-2}
  \P(C_{ilt}=1) \leq \P(C_{ilt}=1|T_{ilt}=1) 
\end{align}
This can lead to a much tighter bound especially when $\P(C_{ilt}=1|T_{ilt}=1)$ is substantially less than one.  For example, in our application, for the whole state of Tennessee, $\P(C_{ilt}=1|T_{ilt}=1)$ is roughly equal to 6\% on May 9.  This immediately leads to a much tighter bound on the number of actual cases relative to not putting any restrictions on $\P(C_{ilt}=1|T_{ilt}=0)$.  

\subsection{Policy Evaluation with Nonrandomly Missing Data}

The previous section discussed how to bound the number of actual Covid-19 cases in a particular location.  The main goal of the paper is to go beyond these descriptive bounds and evaluate how a policy affects the (unobserved) number of actual Covid-19 cases as well as other outcomes such as confirmed cases and trips to work.  We discuss our approach in this section.

For this section, our arguments are about policy effects in certain locations and, therefore, we slightly modify the notation from the previous section.  In particular, define $C_{lt} := \P(C_{ilt}=1)$, $R_{lt} := \P(R_{ilt}=1)$, $T_{lt} := \P(T_{ilt}=1)$.  These are defined for a particular location (rather than for a particular individual) and correspond to the fraction of the population in location $l$ that has had Covid-19, that have tested positive for Covid-19 (i.e., the per capita number of confirmed cases), and that have been tested for Covid-19, respectively.\footnote{Also, notice that we do not need to estimate $R_{lt}$ and $T_{lt}$; rather each of them is exactly observed.}  We also suppose that we have access to location-level covariates $X_l$ that do not vary over time; in practice, the most important covariate is the total population in a particular location.  Some of the results below consider policy effects on other outcomes; in that case we denote the location-level outcome in time period $t$ by $Y_{lt}$ (e.g., the number of deaths or the number of trips to work).

In order to think about policy effects, we define potential outcomes for location $l$ in time period $t$.  In particular, let $C_{lt}(1)$, $T_{lt}(1)$, $R_{lt}(1)$, and $Y_{lt}(1)$ denote the per capita number of actual Covid-19 cases, the per capita number of tests, the per capita number of confirmed cases, as well as other outcomes that would occur in location $l$ in time period $t$ if the policy were in place.  Similarly, if the policy is not in place for location $l$ in time period $t$, we denote the untreated potential outcomes that would occur in this case by: $C_{lt}(0)$, $T_{lt}(0)$, $R_{lt}(0)$, and $Y_{lt}(0)$.  To conserve on notation, define $Z_{lt}(d) = (Y_{lt}(d), R_{lt}(d), T_{lt}(d), X_l')'$ for $d \in \{0,1\}$.  This collects the covariates and all potential outcomes except for $C_{lt}(d)$. Also, define $Z^*_{lt}(d) = (Z_{lt}(d)',C_{lt}(d))'$ which additionally includes per capita actual cases.

Next, let $D_l$ be a binary variable indicating treatment participation.  For locations that participate in the policy, $D_l=1$; otherwise, $D_l=0$.  Also suppose that  there are two time periods: $t^*$ and $t^*-1$,\footnote{Our results extend immediately to the case where there are more available time periods.} and that the policy is implemented between time periods $t^*$ and $t^*-1$.  In this setup, we observe
\begin{align*}
  Z_{lt^*} = D_l Z_{lt^*}(1) + (1-D_l)Z_{lt^*}(0) \quad \textrm{and} \quad Z_{lt^*-1} = Z_{lt^*-1}(0)
\end{align*}
In other words, in post-treatment time periods we observe treated potential outcomes for locations that participate in the treatment and observe untreated potential outcomes for locations that do not participate in the treatment.  In pre-treatment time periods, we observe untreated potential outcomes for all locations. %

\subsubsection{Policy Effects on Observed Outcomes}
To start with, consider identifying the effect of a Covid-19 related policy on some observed outcome (e.g., the number of Covid-19 tests, confirmed cases, or  trips to work) in location $l$ at time period $t^*$.  We start with this case because it is simpler as $Y_{lt}$, the outcome, is fully observed while $C_{lt}$, the per capita number of actual cases in location $l$, is not.  Our interest in this section is in identifying
\begingroup
\small
\begin{align*}
  ATT_Y(Z_{lt^*-1}) = \E[ Y_{lt^*}(1) - Y_{lt^*}(0) | Z_{lt^*-1}, D_{l}=1] \quad \textrm{and} \quad ATT_Y = \E[Y_{lt^*}(1) - Y_{lt^*}(0)|D_l=1]
\end{align*}
\endgroup
$ATT_Y(Z_{lt^*-1})$ is the average effect of the  policy on the outcome among treated locations with pre-treatment characteristics $Z_{lt^*-1}$.  $ATT_Y$ is the overall average effect of the policy among treated locations.  We make the following assumption
\begin{assumption}[Unconfoundedness] \label{ass:unc} \singlespacing
  \begin{align*}
    \E[Y_{lt^*}(0) | Z^*_{lt^*-1}(0), D_l=1] = \E[Y_{lt^*}(0) | Z^*_{lt^*-1}(0), D_l=0]
  \end{align*}
\end{assumption}

\Cref{ass:unc} is a standard and widely used assumption to identify the effect of some policy (see, for example, \citet{imbens-wooldridge-2009}).  It says that, if the policy had not been enacted, on average, outcomes in treated locations would have been the same as outcomes in untreated locations that had the same pre-treatment characteristics; i.e., the same outcomes in the previous period, the same per capita number of confirmed cases, the same number of per capita tests, the same population, \textit{as well as the same per capita number of actual cases}.\footnote{It is also worth pointing out that \Cref{ass:unc} is not a main requirement of our approach.  Depending on the particular outcome of interest, one could employ an alternative baseline identification strategy.  To give one particular leading example, one could replace the unconfoundedness assumption in \Cref{ass:unc} with a conditional parallel trends assumptions (i.e., that $\E[\Delta Y_{lt^*}(0) | Z^*_{lt^*-1}(0), D_l=1] = \E[\Delta Y_{lt^*}(0) | Z^*_{lt^*-1}(0), D_l=0]$) and then show an analogous result to \Cref{prop:1} but with $\Delta Y_{lt^*}$ replacing $Y_{lt^*}$ everywhere in that proposition (note that, in this case, $Z_{lt^*-1}(0)$ ought to be modified to not include the lagged outcome).  Thus, the relatively more important assumption in this section is the one in \Cref{ass:mar}.}

One cannot immediately use \Cref{ass:unc} because $Z^*_{lt^*-1}(0)$ includes $C_{lt^*-1}(0)$ --- the per capita number of actual Covid-19 cases in a particular location --- which is unobserved.  But, in practice, many outcomes in period $t^*$ are likely to depend on how widespread Covid-19 has been -- even if it has gone largely undetected.  Therefore, it seems quite important to control for the (unobserved) number of cases.  To address this issue, we make the following assumption

\begin{assumption}[Conditional Independence of Pre-Policy Actual Covid-19 Cases] \label{ass:mar} \singlespacing
  \begin{align*}
    C_{lt^*-1}(0) \independent D_l | Z_{lt^*-1}(0)
  \end{align*}
\end{assumption}
\Cref{ass:mar} says that, in the pre-treatment period, the distribution of the per capita number of actual Covid-19 cases was the same among treated and untreated locations that had the same pre-policy characteristics.  %
It rules out systematic differences in unobserved total cases in the pre-treatment period among treated and untreated locations with similar populations and that had run a similar number of tests and had confirmed a similar number of cases.%

\begin{proposition} \singlespacing \label{prop:1}  Under \Cref{ass:unc,ass:mar}, $ATT_Y(Z_{lt^*-1})$ and $ATT_Y$ are identified and given by
  \begin{align*}
    ATT_Y(Z_{lt^*-1}) = \E[Y_{lt^*} | Z_{lt^*-1}, D_l=1] - \E[Y_{lt^*} | Z_{lt^*-1}, D_l=0]
  \end{align*}
  and
  \begin{align*}
    ATT_Y = \E[ATT_Y(Z_{lt^*-1})|D_l=1]
  \end{align*}
\end{proposition}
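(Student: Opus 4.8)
The plan is to decompose $ATT_Y(Z_{lt^*-1})$ into the treated-group mean of $Y_{lt^*}(1)$ and the counterfactual mean of $Y_{lt^*}(0)$, dispose of the first piece directly via the observation rule, and establish the (harder) equality for the counterfactual piece by integrating out the unobserved pre-treatment case count. First I would write
\begin{align*}
  ATT_Y(Z_{lt^*-1}) = E[Y_{lt^*}(1)\mid Z_{lt^*-1}, D_l=1] - E[Y_{lt^*}(0)\mid Z_{lt^*-1}, D_l=1].
\end{align*}
For the first term, since $Z_{lt^*-1}=Z_{lt^*-1}(0)$ is observed for every unit and treated units reveal $Y_{lt^*}=Y_{lt^*}(1)$ in period $t^*$, the observation rule immediately gives $E[Y_{lt^*}(1)\mid Z_{lt^*-1}, D_l=1] = E[Y_{lt^*}\mid Z_{lt^*-1}, D_l=1]$. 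The real content of the proposition is therefore to show that the counterfactual term equals $E[Y_{lt^*}\mid Z_{lt^*-1}, D_l=0]$.

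The main obstacle is that \Cref{ass:unc} delivers equality of conditional means only after conditioning on $Z^*_{lt^*-1}(0)$, which contains the unobserved $C_{lt^*-1}(0)$, whereas I must re-express everything in terms of the observed $Z_{lt^*-1}(0)$. The plan is to bridge this gap with the law of iterated expectations, introducing $C_{lt^*-1}(0)$ as an intermediate conditioning variable:
\begin{align*}
  E[Y_{lt^*}(0)\mid Z_{lt^*-1}(0), D_l=1]
    = E\big[\, E[Y_{lt^*}(0)\mid Z^*_{lt^*-1}(0), D_l=1] \,\big|\, Z_{lt^*-1}(0), D_l=1\big],
\end{align*}
where the outer expectation averages over the conditional distribution of $C_{lt^*-1}(0)$ given $(Z_{lt^*-1}(0), D_l=1)$. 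Applying \Cref{ass:unc} to the inner conditional mean replaces $D_l=1$ by $D_l=0$ inside the integrand, and then the crucial step invokes \Cref{ass:mar}: because $C_{lt^*-1}(0)\independent D_l \mid Z_{lt^*-1}(0)$, the conditional law of $C_{lt^*-1}(0)$ governing the outer average is identical whether we condition on $D_l=1$ or $D_l=0$, so the outer conditioning event may be switched to $D_l=0$ as well. Collapsing the iterated expectation then yields $E[Y_{lt^*}(0)\mid Z_{lt^*-1}(0), D_l=0]$, which by the observation rule for untreated units equals $E[Y_{lt^*}\mid Z_{lt^*-1}, D_l=0]$. This completes the expression for $ATT_Y(Z_{lt^*-1})$.

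Finally, the aggregate formula follows from one more application of the law of iterated expectations within the treated subpopulation, conditioning on $Z_{lt^*-1}$:
\begin{align*}
  ATT_Y = E[Y_{lt^*}(1)-Y_{lt^*}(0)\mid D_l=1]
        = E\big[\, E[Y_{lt^*}(1)-Y_{lt^*}(0)\mid Z_{lt^*-1}, D_l=1]\,\big|\, D_l=1\big]
        = E[ATT_Y(Z_{lt^*-1})\mid D_l=1].
\end{align*}
I expect the delicate point to be keeping straight the two distinct roles played by the assumptions: \Cref{ass:unc} operates \emph{inside} the conditional mean, at the level of the integrand, while \Cref{ass:mar} operates on the \emph{integrating measure}, namely the conditional distribution of the unobserved $C_{lt^*-1}(0)$. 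Separating these two roles is exactly what lets the argument eliminate the unobservable and reduce to a statement about observed quantities.
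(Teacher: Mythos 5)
Your proposal is correct and follows essentially the same argument as the paper's proof: introduce $C_{lt^*-1}(0)$ via the law of iterated expectations, apply \Cref{ass:unc} to switch $D_l=1$ to $D_l=0$ in the inner conditional mean, apply \Cref{ass:mar} to switch the outer integrating measure, collapse, and invoke the observation rule; the aggregation step is likewise identical. Your closing remark about the two assumptions acting on the integrand versus the integrating measure is exactly the logic the paper's proof relies on.
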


The proof of \Cref{prop:1} is provided in \Cref{sec:proofs}.  The result in \Cref{prop:1} says that the average effect of the policy among treated locations is point identified even in the case where the outcomes themselves could depend on the number of actual cases and the number of actual cases is not observed.  The main conditions for this identification result are (i) some baseline identification strategy (we used unconfoundedness) and (ii) the conditional independence assumption for pre-treatment actual Covid-19 cases.  %
Moreover, $ATT_Y$ can be recovered by comparing outcomes in treated locations to outcomes in untreated locations that had the same observed pre-treatment characteristics.

\subsubsection{Policy Effects on (Unobserved) Actual Covid-19 Cases}

Next, we consider trying to identify the effect of a policy on the per capita number of actual Covid-19 cases.
This is distinctly more challenging than the previous case because the number of actual cases is not observed.  Relative to the descriptive bounds presented in \Cref{sec:descriptive-bounds}, developing bounds on policy effects introduces new challenges.  For example, suppose that there are two locations, one treated and one untreated, and we are (i) comfortable with the idea that number of cases experienced in the untreated location is equal to the number of cases that the treated location would have experienced if it had not implemented the policy but (ii) are only able to bound the actual number of cases in each location.  In this case, without further assumptions, the bounds on the effect of the policy are equal to the difference between the upper (or lower) bound for the treated location minus the lower (or upper) bound for the untreated location.  These sorts of bounds are generally very wide and likely to cover 0.  We introduce some additional assumptions that are able to substantially narrow these sorts of bounds; the assumptions that we introduce formalize the idea that if both confirmed cases decrease in the treated location relative to the untreated location while testing does not decrease in the treated location relative to the untreated location, then this would be strong evidence that the policy led to a decrease in actual Covid-19 cases relative to what they would have been had the policy not been implemented.  We also formalize what other conditions need to hold in order for this intuition to be correct.

To start with, we continue to make \Cref{ass:mar}, and we modify \Cref{ass:unc} to hold jointly for all untreated potential outcomes and covariates:
\begin{assumption}  [Covid Unconfoundedness] \singlespacing \label{ass:covid-unc}
  \begin{align*}
    Z^*_{lt^*}(0) \independent D_l | Z^*_{lt^*-1}(0)
  \end{align*}
\end{assumption}
\Cref{ass:covid-unc} is similar to \Cref{ass:unc} in that it is an unconfoundedness type of assumption, but it applies to all untreated potential outcomes and covariates.  It says that, in the absence of the policy intervention,  the distribution of pandemic related variables (e.g., per capita actual cases, confirmed cases, and tests) would have been the same for treated locations and untreated locations conditional on having the same pre-treatment characteristics.  This type of assumption is compatible with SIR (which stands for Susceptible, Infected, Recovered) epidemic models which are the most prominent type of models for studying pandemics (see, for example, \citet{kermack-mckendrick-1927,allen-2008,allen-2017} in general and \citet{oka-wei-zhu-2021,fernandez-jones-2022,ellison-2020,acemoglu-chernozhukov-werning-whinston-2021,bisin-moro-2022} in economics).  \citet{chernozhukov-kasahara-schrimpf-2021,allcott-boxell-conway-ferguson-gentzkow-goldman-2020,callaway-li-2022b} provide connections between between epidemic models and various policy evaluation strategies.  For example, the relative merits of unconfoundedness-type identifying assumptions compared to difference in differences-type identifying assumptions are discussed at length in \citet{callaway-li-2022b} with that paper generally arguing in favor of unconfoundedness rather than difference in differences due to the high degree of nonlinearity (and lack of additively separable location-specific unobserved heterogeneity) in epidemic models.  %

Similarly to the previous section, we focus on identifying
\begingroup
\small
\begin{align*}
  ATT_C(Z_{lt^*-1}) = \E[C_{lt^*}(1) - C_{lt^*}(0) | Z_{lt^*-1}, D_l=1] \quad \textrm{and} \quad ATT_C = \E[C_{lt^*}(1) - C_{lt^*}(0) | D_l=1]
\end{align*}
\endgroup
$ATT_C(Z_{lt^*-1})$ is the average effect of the policy on the per capita number of actual Covid-19 cases across treated locations with pre-treatment characteristics $Z_{lt^*-1}$.  $ATT_C$ is the overall average effect of the policy on the per capita number of actual Covid-19 cases across treated locations.  In addition, the same sorts of arguments as in the previous section continue to go through suggesting that
\begin{align} \label{eqn:attc}
  ATT_C(Z_{lt^*-1}) &= \E[C_{lt^*}|Z_{lt^*-1},D_l=1] - \E[C_{lt^*}|Z_{lt^*-1},D_l=0] \nonumber \\
                    &= \P(C_{ilt^*} = 1| Z_{lt^*-1}, D_l=1) - \P(C_{ilt^*} = 1 | Z_{lt^*-1}, D_l=0) 
\end{align}
and
\begin{align*}
  ATT_C &= \E[C_{lt^*}|D_l=1] - \E\Big[ \E[C_{lt^*}|Z_{lt^*-1},D_l=0] | D_l=1 \Big] 
\end{align*}
The problem here is that $C_{lt^*}$ is not directly observed, and, therefore, as in \Cref{sec:descriptive-bounds}, the terms in \Cref{eqn:attc} are only partially identified.  Thus, our approach is to construct bounds on $ATT_C$.  

Before stating these results, we define three more terms to conserve on notation below.  First, for $d \in \{0,1\}$, define
\begin{align*}
      \gamma_d(Z_{lt^*-1}) := \frac{\P(R_{ilt}=1|Z_{lt^*-1},D=d)}{1-FNR}
\end{align*}
$\gamma_d(Z_{lt^*-1})$ is a scaled version of the number of confirmed cases in location $l$ in time period $t$.  This term is point identified as we observe the number of confirmed cases and know the false negative rate of the test.  Notice that this term corresponds to the first term in \Cref{eqn:covid-bound-1} (now conditional on $Z_{lt^*-1}$ and $D_l=d$).\footnote{For some of the expressions below and in the proofs, it is also helpful to notice that $\gamma_d(Z_{lt^*-1}) = \P(C_{ilt^*}=1, T_{ilt^*}=1 | Z_{lt^*-1}, D_l=d)$ (which follows from the same sorts of arguments as in \Cref{sec:descriptive-bounds}).}  Second, define
\begin{align*}
  \tau_{d}(Z_{lt^*-1}) := \P(T_{ilt^*}=1|Z_{lt^*-1}, D_l=d)
\end{align*}
which is the probability of being tested conditional on a location's pre-treatment characteristics and treatment status.  This is identified by the sampling process since we observe the number of tests in a particular location.  Finally, define
\begin{align*}
    \phi_d(Z_{lt^*-1}) := \P(C_{ilt^*}=1|T_{ilt^*}=0,Z_{lt^*-1}, D_l=d)
\end{align*}
which is the probability of having had Covid-19 conditional on having not been tested, a location's pre-treatment characteristics, and treatment status.  As above, this term is not identified because we do not observe the rate of Covid-19 cases among untested individuals.  Next, we provide an intermediate result that decomposes $ATT_C(Z_{lt^*-1})$ that is useful for developing our main bounds later in this section.

\begin{proposition} \singlespacing \label{prop:attc-decomp} Under \Cref{ass:mar,ass:covid-unc},
\begingroup
\small
  \begin{align*}
      ATT_C(Z_{lt^*-1}) = \big(\gamma_1(Z_{lt^*-1}) - \gamma_0(Z_{lt^*-1}) \big) + (\phi_1(Z_{lt^*-1})(1-\tau_1(Z_{lt^*-1})) - \phi_0(Z_{lt^*-1})(1-\tau_0(Z_{lt^*-1})) \big)
  \end{align*}
  \endgroup
\end{proposition}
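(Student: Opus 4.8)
The plan is to begin from the representation of $ATT_C(Z_{lt^*-1})$ already recorded in \Cref{eqn:attc} and then to split the conditional prevalence of Covid-19 according to whether or not an individual has been tested. First I would establish \Cref{eqn:attc} itself, which is where \Cref{ass:mar,ass:covid-unc} actually do their work. The treated term $E[C_{lt^*}(1)\mid Z_{lt^*-1}, D_l=1]$ equals $P(C_{ilt^*}=1\mid Z_{lt^*-1}, D_l=1)$ by iterated expectations, since $C_{lt^*}$ is the within-location average of the individual indicator $C_{ilt^*}$; the counterfactual term $E[C_{lt^*}(0)\mid Z_{lt^*-1}, D_l=1]$ is then matched to the observed untreated quantity $P(C_{ilt^*}=1\mid Z_{lt^*-1}, D_l=0)$. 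This last step mirrors the proof of \Cref{prop:1}: \Cref{ass:covid-unc} delivers equality of the untreated-potential-outcome distribution across values of $D_l$ after conditioning on the full pre-treatment vector $Z^*_{lt^*-1}(0)$, and \Cref{ass:mar} lets me reduce that conditioning set from $Z^*_{lt^*-1}(0)$ (which contains the unobserved $C_{lt^*-1}(0)$) down to the observed $Z_{lt^*-1}$, by averaging out $C_{lt^*-1}(0)$ against a conditional distribution that \Cref{ass:mar} guarantees is the same for $D_l=1$ and $D_l=0$.

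Given \Cref{eqn:attc}, the remaining argument is a clean application of the law of total probability. For each $d\in\{0,1\}$ I would partition on testing status $T_{ilt^*}$ to write
\begin{align*}
  P(C_{ilt^*}=1\mid Z_{lt^*-1}, D_l=d) &= P(C_{ilt^*}=1\mid T_{ilt^*}=1, Z_{lt^*-1}, D_l=d)\,\tau_d(Z_{lt^*-1}) \\
  &\hspace{10pt} + \phi_d(Z_{lt^*-1})\big(1-\tau_d(Z_{lt^*-1})\big),
\end{align*}
where the untested contribution is immediately the second summand by the definitions of $\tau_d$ and $\phi_d$. Subtracting the $d=0$ expression from the $d=1$ expression then yields the stated decomposition, provided the tested contribution $P(C_{ilt^*}=1\mid T_{ilt^*}=1, Z_{lt^*-1}, D_l=d)\,\tau_d(Z_{lt^*-1})$ can be identified with $\gamma_d(Z_{lt^*-1})$.

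The one step that needs genuine care — and the closest thing to an obstacle here — is verifying that the tested contribution equals $\gamma_d(Z_{lt^*-1})$. I would reproduce the manipulation behind \Cref{eqn:CcondT}, now conditional on $(Z_{lt^*-1}, D_l=d)$: using $R_{ilt^*}=1\implies T_{ilt^*}=1$, a zero false-positive rate, and false negative rate $FNR$, one obtains $P(C_{ilt^*}=1\mid T_{ilt^*}=1, Z_{lt^*-1}, D_l=d) = P(R_{ilt^*}=1\mid T_{ilt^*}=1, Z_{lt^*-1}, D_l=d)/(1-FNR)$. Multiplying by $\tau_d(Z_{lt^*-1})$ and again invoking $R_{ilt^*}=1\implies T_{ilt^*}=1$ collapses the product to $P(R_{ilt^*}=1\mid Z_{lt^*-1}, D_l=d)/(1-FNR)=\gamma_d(Z_{lt^*-1})$, which is precisely the identity flagged in the footnote defining $\gamma_d$. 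Substituting this into the display above gives $P(C_{ilt^*}=1\mid Z_{lt^*-1}, D_l=d)=\gamma_d(Z_{lt^*-1})+\phi_d(Z_{lt^*-1})(1-\tau_d(Z_{lt^*-1}))$ for each $d$, and differencing over $d$ completes the proof. Everything after \Cref{eqn:attc} is routine; the only conceptual content is careful bookkeeping between the location-level and individual-level notation and the correct placement of \Cref{ass:mar,ass:covid-unc} in the first step.
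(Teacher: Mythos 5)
Your proposal is correct and follows essentially the same route as the paper's own proof: first reduce $ATT_C(Z_{lt^*-1})$ to the difference of observed conditional prevalences via \Cref{ass:covid-unc,ass:mar} (mirroring \Cref{prop:1}), then decompose each prevalence by testing status and identify the tested contribution with $\gamma_d(Z_{lt^*-1})$ via the argument behind \Cref{eqn:CcondT}. The only difference is that you spell out the $P(C_{ilt^*}=1\mid T_{ilt^*}=1,\cdot)\,\tau_d = \gamma_d$ step explicitly, which the paper compresses into ``plugging in from \Cref{eqn:covid-bound-1,eqn:CcondT}.''
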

The proof of \Cref{prop:attc-decomp} is provided in \Cref{sec:proofs}, but it is worth pointing out that it holds almost immediately by (i) the unconfoundedness assumption and the conditional independence assumption for pre-treatment actual cases combined with (ii) similar calculations to the ones in \Cref{sec:descriptive-bounds}.  It is also worth noting that all the terms in \Cref{prop:attc-decomp} are identified except $\phi_1$ and $\phi_0$ which are the probability of having had Covid-19 conditional on having not been tested (and pre-treatment location characteristics) for the treated group and untreated group, respectively.

Next, we discuss bounds on $ATT_C$ under the additional condition in \Cref{ass:covid-bound} (recall that this assumption says that the probability of having had Covid-19 among those who have not been tested is less than or equal to the probability of having had Covid-19 conditional on being tested).  

\begin{proposition} \singlespacing \label{prop:2} Under \Cref{ass:covid-bound,ass:mar,ass:covid-unc},
  \begin{align*}
    C^{B,L}_{lt^*}(Z_{lt^*-1}) \leq ATT_C(Z_{lt^*-1}) \leq C^{B,U}_{lt^*}(Z_{lt^*-1})
  \end{align*}
  where
  \begin{align*}
    C_{lt^*}^{B,L}(Z_{lt^*-1}) &:= \gamma_1(Z_{lt^*-1}) - \gamma_0(Z_{lt^*-1}) - \gamma_0(Z_{lt^*-1}) \frac{1-\tau_{0}(Z_{lt^*-1})}{\tau_{0}(Z_{lt^*-1})} \\[10pt] %
    C_{lt^*}^{B,U}(Z_{lt^*-1}) &:= \gamma_1(Z_{lt^*-1}) - \gamma_0(Z_{lt^*-1}) + \gamma_1(Z_{lt^*-1}) \frac{1-\tau_{1}(Z_{lt^*-1})}{\tau_{1}(Z_{lt^*-1})}%
  \end{align*}
  and
  \begin{align*}
    \E\Big[ C^{B,L}_{lt^*}(Z_{lt^*-1}) | D_l=1 \Big] \leq ATT_C \leq \E\Big[ C^{B,U}_{lt^*}(Z_{lt^*-1}) | D_l=1\Big]
  \end{align*}
\end{proposition}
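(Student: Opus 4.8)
The plan is to build directly on the decomposition in \Cref{prop:attc-decomp}, which already isolates the only unidentified objects — the conditional prevalences among the untested, $\phi_1(Z_{lt^*-1})$ and $\phi_0(Z_{lt^*-1})$ — and to bound the pointwise parameter $ATT_C(Z_{lt^*-1})$ by extremizing over the admissible range of these two terms. Since
\begin{align*}
ATT_C(Z_{lt^*-1}) = \big(\gamma_1(Z_{lt^*-1}) - \gamma_0(Z_{lt^*-1})\big) + \big(\phi_1(Z_{lt^*-1})(1-\tau_1(Z_{lt^*-1})) - \phi_0(Z_{lt^*-1})(1-\tau_0(Z_{lt^*-1}))\big)
\end{align*}
and the first bracketed term is point identified, the whole problem reduces to sharp bounds on the second bracketed term.

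First I would pin down the feasible range of each $\phi_d(Z_{lt^*-1})$. The lower end is immediate: $\phi_d \geq 0$ because it is a conditional probability. For the upper end I would invoke \Cref{ass:covid-bound}, applied conditionally on $Z_{lt^*-1}$ and $D_l=d$, which gives $\phi_d(Z_{lt^*-1}) \leq P(C_{ilt^*}=1 \mid T_{ilt^*}=1, Z_{lt^*-1}, D_l=d)$. The remaining step is to rewrite this conditional-on-tested prevalence in terms of identified quantities. Using the identity $\gamma_d(Z_{lt^*-1}) = P(C_{ilt^*}=1, T_{ilt^*}=1 \mid Z_{lt^*-1}, D_l=d)$ together with the definition of $\tau_d$, the definition of conditional probability yields $P(C_{ilt^*}=1 \mid T_{ilt^*}=1, Z_{lt^*-1}, D_l=d) = \gamma_d(Z_{lt^*-1})/\tau_d(Z_{lt^*-1})$, so that $0 \leq \phi_d(Z_{lt^*-1}) \leq \gamma_d(Z_{lt^*-1})/\tau_d(Z_{lt^*-1})$.

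With these bounds in hand, I would exploit the fact that the unidentified term is increasing in $\phi_1$ and decreasing in $\phi_0$ (both weights $1-\tau_d$ are nonnegative), so its extremes are attained at the corners. For the lower bound, set $\phi_1=0$ and $\phi_0=\gamma_0/\tau_0$, giving the $-\gamma_0(1-\tau_0)/\tau_0$ correction in $C^{B,L}_{lt^*}$; for the upper bound, set $\phi_1=\gamma_1/\tau_1$ and $\phi_0=0$, giving the $+\gamma_1(1-\tau_1)/\tau_1$ correction in $C^{B,U}_{lt^*}$. This delivers the pointwise inequalities. Finally, the aggregate bounds follow by taking $E[\,\cdot \mid D_l=1]$ of the pointwise inequalities: since $ATT_C = E[ATT_C(Z_{lt^*-1}) \mid D_l=1]$ by iterated expectations (exactly as in \Cref{prop:1}) and expectation is monotone, the pointwise envelope integrates to the stated envelope for $ATT_C$.

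The step I expect to be the main obstacle is the careful conditional restatement of \Cref{ass:covid-bound} and the translation $P(C_{ilt^*}=1 \mid T_{ilt^*}=1, Z_{lt^*-1}, D_l=d) = \gamma_d/\tau_d$: the assumption as originally written is unconditional, so I must be explicit that it is invoked within each $(Z_{lt^*-1}, D_l=d)$ cell, and I must re-derive the conditional analogue of \Cref{eqn:CcondT} so that the upper envelope on $\phi_d$ is expressed purely through the identified $\gamma_d$ and $\tau_d$. Everything after that — the corner optimization and the integration over the treated distribution of $Z_{lt^*-1}$ — is routine monotonicity.
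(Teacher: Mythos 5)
Your proposal is correct and follows essentially the same route as the paper's own proof: start from the decomposition in \Cref{prop:attc-decomp}, bound each $\phi_d$ by $0$ from below and by $P(C_{ilt^*}=1 \mid T_{ilt^*}=1, Z_{lt^*-1}, D_l=d) = \gamma_d/\tau_d$ from above via \Cref{ass:covid-bound}, take the corner configurations, and average over $Z_{lt^*-1}$ given $D_l=1$. The conditional restatement of \Cref{ass:covid-bound} and the identity $\gamma_d/\tau_d$ that you flag as the delicate step are handled the same way (and equally tersely) in the paper.
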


The proof of \Cref{prop:2} is provided in \Cref{sec:proofs}.  %
The term in common for each of the bounds, $\gamma_1(Z_{lt^*-1}) - \gamma_0(Z_{lt^*-1})$, is driven by differences in confirmed cases among treated and untreated locations with similar pre-treatment characteristics.  The extra term for the lower bound comes from setting the fraction of untested individuals in treated locations who have had Covid-19 to be equal to zero while setting the fraction of untested individuals in untreated locations who have had Covid-19 to be equal to the fraction who have had Covid-19 conditional on being testing (this comes from the bound in \Cref{ass:covid-bound}).  The upper bound comes from doing the opposite.%
\footnote{In practice, the extreme cases that lead to the lower bound and upper bound seem unlikely to hold.  This suggests that these bounds are likely to be quite conservative.}  %
The weights on these terms (the terms involving $\tau_1$ and $\tau_0$) also tend to be very large because the fraction of untested individuals is much larger than the fraction of tested individuals which implies that $\tau_d(Z_{lt^*-1}) \ll 1 - \tau_d(Z_{lt^*-1})$ for $d \in \{0,1\}$.

The drawback of these bounds is that they are unlikely to be informative about the sign of the policy effect.  To see this, notice that, especially early in the pandemic, the terms involving $\gamma_d(Z_{lt^*-1})$ are often quite small due to the number of confirmed cases being small.  On the other hand, the extra terms can be orders of magnitude larger.  %
In our application, using these bounds, the bounds cover 0 in all time periods and are not very informative.

In order to proceed, it is helpful to re-write the expression for $ATT_C(Z_{lt^*-1})$ in \Cref{prop:attc-decomp} as
\begin{align} 
    ATT_C(Z_{lt^*-1}) &= \big(\gamma_1(Z_{lt^*-1}) - \gamma_0(Z_{lt^*-1}) \big) \label{eqn:attc_decomp2-1}\\ 
    & + \big(\phi_1(Z_{lt^*-1}) - \phi_0(Z_{lt^*-1}) \big)(1-\tau_1(Z_{lt^*-1})) \label{eqn:attc_decomp2-2}\\
    & + \phi_0(Z_{lt^*-1})\big(\tau_0(Z_{lt^*-1}) -\tau_1(Z_{lt^*-1}) \big) \label{eqn:attc_decomp2-3}
\end{align}
Recall that the only terms in this expression that are not identified are $\phi_1$ and $\phi_0$.  In the case where a researcher is interested in trying to determine whether or not the policy decreased actual Covid-19 cases, one would be interested in determining if $ATT_C$ is less than or equal to 0.  The term in \Cref{eqn:attc_decomp2-1} comes from the difference between the number of confirmed cases in treated locations relative to untreated locations with similar pre-treatment characteristics and is point identified.  %
Neither of the expressions in Equation (\ref{eqn:attc_decomp2-2}) or (\ref{eqn:attc_decomp2-3}) is point identified.  However, the sign of the term in \Cref{eqn:attc_decomp2-3} is fully determined by the difference between the number of tests under the policy relative to the number tests among similar untreated locations.  If the treatment does not decrease the number of tests, then this term will be non-positive.  Finally, the sign of the term in \Cref{eqn:attc_decomp2-2} depends on the difference between the number of cases among untested individuals in treated locations relative to the number of cases among untested individuals in similar untreated locations.  This difference is not identified and is the most challenging part to think through (we address this point in substantially more detail below).  For now though, it is important to note that the above expression clarifies our intuition from the introduction: a policy that decreases the number of confirmed cases while not decreasing the number of tests does in fact decrease the number of actual cases as long as the number of cases among untested individuals under the policy does not increase relative to what the number of cases among untested individuals would have been if the policy had not been implemented.

To formalize the above discussion, we introduce the following assumption.
\begin{assumption} [Bound on Actual Cases and Untested Individuals] \label{ass:pol-bound} \singlespacing
\begingroup
\small
  \begin{align*}
    \P(C_{ilt^*}(1) = 1, T_{ilt^*}(1) = 0, | Z_{lt^*-1}(0), D_l=1) \leq \P(C_{ilt^*}(0) = 1, T_{ilt^*}(0) = 0 |  Z_{lt^*-1}(0), D_l=1)
  \end{align*}
\endgroup
\end{assumption}

\Cref{ass:pol-bound} says that, for individuals in locations that experience the policy, the joint probability of having had  Covid-19 and not being tested under the policy is less than or equal to the joint probability of having Covid-19 and not being tested in the absence of the policy.  Under \Cref{ass:mar,ass:covid-unc}, this is equivalent to saying that the sum of the terms in \Cref{eqn:attc_decomp2-2,eqn:attc_decomp2-3} is less than or equal to 0.  We provide one set of more primitive conditions and more detailed discussion of this assumption in the Supplementary Appendix.  There, we show that this assumption holds under the conditions that (i) the policy does not make tests less available than they otherwise would have been without the policy, (ii) the policy does not increase the number of Covid-19 cases among the fixed group of individuals who would not be tested under the policy relative to what they would have been without the policy, (iii) there is not negative selection into taking the test for ``compliers'' (those who would be tested under the policy but would not be tested in the absence of the policy) relative to ``never-takers'' (those who are not tested with or without the policy); that is, the probability of having Covid-19 is at least as high among ``compliers'' as among ``never-takers.''

Finally for this section, we provide our main result on tighter bounds for evaluating policies early in a pandemic.

\begin{proposition} \singlespacing \label{prop:3} Under \Cref{ass:covid-bound,ass:mar,ass:covid-unc,ass:pol-bound},
  \begin{align*}
    C^{C,L}_{lt^*}(Z_{lt^*-1}) \leq ATT_C(Z_{lt^*-1}) \leq C^{C,U}_{lt^*}(Z_{lt^*-1})
  \end{align*}
  where
  \begin{align*}
    C_{lt^*}^{C,L}(Z_{lt^*-1}) &:= C_{lt^*}^{B,L}(Z_{lt^*-1}) \\[10pt]
    C_{lt^*}^{C,U}(Z_{lt^*-1}) &:= \gamma_1(Z_{lt^*-1}) - \gamma_0(Z_{lt^*-1})
  \end{align*}  
  and
  \begin{align*}
    \E\Big[ C^{C,L}_{lt^*}(Z_{lt^*-1}) | D_l=1 \Big] \leq ATT_C \leq \E\Big[ C^{C,U}_{lt^*}(Z_{lt^*-1}) | D_l=1\Big]
  \end{align*}
\end{proposition}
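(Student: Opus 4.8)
The plan is to obtain the upper bound from the regrouped decomposition in \Cref{eqn:attc_decomp2-1,eqn:attc_decomp2-2,eqn:attc_decomp2-3} together with \Cref{ass:pol-bound}, and to obtain the lower bound for free by inheriting it from \Cref{prop:2}, whose hypotheses are a subset of those of \Cref{prop:3}. The aggregate bounds then follow by taking conditional expectations across the pointwise inequalities.

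For the upper bound I would start from the regrouped decomposition, which (after the algebra already displayed in the text) collapses back to
\begin{align*}
  ATT_C(Z_{lt^*-1}) = \big(\gamma_1(Z_{lt^*-1}) - \gamma_0(Z_{lt^*-1})\big) + \big(\phi_1(Z_{lt^*-1})(1-\tau_1(Z_{lt^*-1})) - \phi_0(Z_{lt^*-1})(1-\tau_0(Z_{lt^*-1}))\big),
\end{align*}
so the task reduces to showing the second bracket is non-positive. The key identification step is to recognize each product as a joint probability: by the definitions of $\phi_d$ and $\tau_d$ and the definition of conditional probability, $\phi_d(Z_{lt^*-1})(1-\tau_d(Z_{lt^*-1})) = P(C_{ilt^*}=1, T_{ilt^*}=0 \mid Z_{lt^*-1}, D_l=d)$. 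For the treated arm ($d=1$) the observed period-$t^*$ variables equal the treated potential outcomes, and since pre-treatment observed characteristics coincide with untreated potential outcomes ($Z_{lt^*-1}=Z_{lt^*-1}(0)$), this is exactly the left-hand side of \Cref{ass:pol-bound}. For the untreated arm ($d=0$) the observed variables equal untreated potential outcomes, giving $P(C_{ilt^*}(0)=1, T_{ilt^*}(0)=0 \mid Z_{lt^*-1}(0), D_l=0)$.

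The hard part, and the only genuinely nontrivial step, is swapping the conditioning event $D_l=0$ for $D_l=1$ in this last expression so that it matches the right-hand side of \Cref{ass:pol-bound}. This is where \Cref{ass:covid-unc,ass:mar} do the work: covid-unconfoundedness makes the joint distribution of the untreated potential outcomes at $t^*$ independent of $D_l$ after conditioning on $Z^*_{lt^*-1}(0)$, and \Cref{ass:mar} lets me integrate out the unobserved pre-treatment component $C_{lt^*-1}(0)$ so that conditioning on the observed $Z_{lt^*-1}(0)$ alone is enough, exactly the mechanism already used to prove \Cref{prop:attc-decomp}. Hence $\phi_0(1-\tau_0) = P(C_{ilt^*}(0)=1, T_{ilt^*}(0)=0 \mid Z_{lt^*-1}(0), D_l=1)$, the right-hand side of \Cref{ass:pol-bound}. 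The bracket therefore equals (left-hand side minus right-hand side) of \Cref{ass:pol-bound}, which is $\leq 0$, yielding $ATT_C(Z_{lt^*-1}) \leq \gamma_1(Z_{lt^*-1}) - \gamma_0(Z_{lt^*-1}) = C^{C,U}_{lt^*}(Z_{lt^*-1})$.

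For the lower bound I would simply note that $C^{C,L}_{lt^*}=C^{B,L}_{lt^*}$ and that every hypothesis of \Cref{prop:2} (namely \Cref{ass:covid-bound,ass:mar,ass:covid-unc}) is maintained in \Cref{prop:3}; adding \Cref{ass:pol-bound} only shrinks the set of admissible data-generating processes, so the pointwise lower bound of \Cref{prop:2} continues to hold verbatim. Finally, to pass to the aggregate statements I would take $E[\,\cdot \mid D_l=1]$ across the pointwise inequalities: monotonicity of conditional expectation preserves each inequality, and $E[ATT_C(Z_{lt^*-1}) \mid D_l=1] = ATT_C$ by the law of iterated expectations (the same reduction used in \Cref{prop:1}), giving the displayed bounds on $ATT_C$.
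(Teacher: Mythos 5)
Your proposal is correct and follows essentially the same route as the paper: the paper packages your key step (identifying $\phi_d(Z_{lt^*-1})(1-\tau_d(Z_{lt^*-1}))$ as the joint probability $P(C_{ilt^*}=1,T_{ilt^*}=0\mid Z_{lt^*-1},D_l=d)$, converting between observed and potential outcomes, and invoking \Cref{ass:covid-unc,ass:mar} to swap the conditioning on $D_l$ before applying \Cref{ass:pol-bound}) into a separate auxiliary lemma, but the content is identical, as are your treatment of the lower bound and the aggregation step.
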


The proof of \Cref{prop:3} is provided in \Cref{sec:proofs}.  Notice that the lower bound is the same as it was in the previous case, but that the upper bound can be substantially tighter.  In particular, the upper bound does not contain the same extra term as in \Cref{prop:2}; as discussed earlier, this term is the ``dominant'' term in the upper bound, and it is removed under the additional condition in \Cref{ass:pol-bound}.  This result provides conditions under which a policy that leads to a decrease in confirmed cases also indicates that the policy decreased actual cases. 

\begin{remark} \label{rem:shorter-bounds} \onehalfspacing
    The assumptions discussed in this section are useful for tightening the upper bound on policy effects which, therefore, targets answering the question of whether or not the policy reduced the number of actual Covid-19 cases.  In the Supplementary Appendix, we consider an additional type of assumption, which says that the probability of having Covid-19 among the untested is not ``too different'' with or without the policy, that can be useful for tightening the lower bound.
\end{remark}

\begin{remark} \onehalfspacing
    Even if the researcher is willing to assume that the probability of having had Covid-19 is the same, but not exactly known, among the untested for both the treated group and the untreated group (which is likely to be a very strong assumption), in general, this would still lead to bounds on policy effects.  We provide more details for this point in the Supplementary Appendix.%
\end{remark}

\section{Application: Tennessee's Open-Testing Policy}

 \label{sec:policy}

In this section, we study the effects of Tennessee's Covid-19 testing policies early in the pandemic.  Over the first few months of the pandemic, Tennessee was one of the most aggressive states in terms of making Covid-19 tests widely available for state residents.  However, studying the effects of testing availability itself is challenging for the reasons discussed above: (i) holding fixed the number of actual cases, more testing can mechanically lead to confirming more cases, and (ii) non-random selection into being tested.  Relative to other states, one important policy that led to the widespread availability of tests was that Tennessee directly paid private labs for processing tests resulting in private labs in the state quickly ramping up testing capacity (\citet{farmer-2020}).  Although Tennessee already had a high-level of testing, its distinctive early policy was its open-testing policy that  simultaneously increased the availability of tests and relaxed all eligibility requirements for obtaining a test.    Tennessee announced its open-testing policy on April 15, 2020 (\citet{tennessean-2020}), and over the weekends of April 18, April 25, and May 2, more than 23,000 individuals were tested at a total of 67 different testing sites (\citet{tngov-2020}).  Following those three weekends, Tennessee modified its open-testing policy to emphasize testing high-risk populations; that being said, relative to most other states, the requirements to be tested for Covid-19 in Tennessee continued to be low even after the policy was modified (\citet{hartnett-2020}).  We focus on the effects of Tennessee's testing policy through May 9 which is a week after the open-testing policy ended and as Tennessee moved to an alternative testing strategy targeting high risk groups.  We provide some additional details about Tennessee's policy in the Supplementary Appendix.

Importantly, for Tennessee, we are able to address a number of additional challenges for studying policies implemented early in the pandemic.  First, we document below that, besides having more widespread availability of tests, the other policies implemented in Tennessee (e.g., stay-at-home orders and school closures, among others) were quite similar to the policies of its surrounding states.  Second, we use disaggregated county-level data rather than more aggregated state-level data.  An important aspect of our identification strategy is to compare locations that had experienced similar pandemics locally prior to the policy being implemented (as well as having similar other characteristics such as population).  In general, we are able to make substantially more suitable comparisons at the county-level than at the state level.%

Our main results below use county-level data, but to start with, we provide some descriptive, state-level data which is presented in \Cref{fig:state-tests-cases}.  Relative to Alabama, Arkansas, Georgia, Kentucky, Mississippi, and North Carolina (these are the states that make up our comparison group below and are all states that border Tennessee),\footnote{Tennessee also shares small borders with Missouri and Virginia but these states tend to be geographically further away, and we do not use them as part of the comparison group below.} by the end of March, Tennessee was conducting more per capita tests per day than any of these states.  It is useful to point out that this was quite early in the pandemic.  Tennessee's first confirmed Covid-19 case was on March 5 (\citet{tngov-2022}), and by the end of the month Tennessee had 1,981 confirmed cases.  That being said, the big divergence between testing in Tennessee and its surrounding states corresponded to the beginning of the open-testing starting the weekend of April 18.  For example, on April 25, Tennessee was running between 50\% and 80\% more tests per capita than its surrounding states (see panel (b) of \Cref{fig:state-tests-cases}).  On the other hand, Tennessee was closer to the middle in terms of number of confirmed cases over the entire period that we consider (see panels (c) and (d) of \Cref{fig:state-tests-cases}).\footnote{The noticeable spike in confirmed cases in Tennessee around May 2 is driven by large Covid-19 outbreaks in prisons in Trousdale and Bledsoe counties (\citet{allison-timms-2020} and \citet{timms-2020}).  In some of our descriptive analysis, we keep these counties, but in our main results, we drop these counties.}  We provide an analogous figure to \Cref{fig:state-tests-cases} but for counties in Tennessee in \Cref{fig:county-tests-cases} in the Supplementary Appendix.\footnote{\Cref{fig:state-tests-cases} also illustrates the usefulness of using county-level data relative to state-level data.  Even among its immediate surrounding states, there are notable differences in terms of testing and confirmed cases already in March 2020; in terms of testing and confirmed cases, the most similar state is Alabama, but Tennessee has a 70\% higher population density than Alabama suggesting limited usefulness of using Alabama as a comparison state at the aggregate, state-level.  Taken together, this suggests that, at the state-level, the states bordering Tennessee are not similar enough to reliably use to produce counterfactual pandemics for Tennessee.  That said, it is much less demanding to take a particular county in Tennessee and to find counties in surrounding states that were experiencing similar pandemics in pre-treatment periods.}

\subsection{Data} \label{sec:data}

The main data that we use consists of county-level data on tests, confirmed cases, and deaths for Tennessee and its surrounding states from the Centers for Disease Control and Prevention.  The CDC collects local data from state and/or local health departments and provides it in a unified format.  For our purposes, the most challenging variable to collect is county-level testing which, to our knowledge, is not directly available over time at the county-level; we scraped this data county by county from the CDC's Covid Data Tracker Integrated County View.\footnote{See  \url{https://covid.cdc.gov/covid-data-tracker/\#county-view?list_select_state=Georgia&data-type=CommunityLevels&list_select_county=13219&null=CommunityLevels} for an example of the county-level reports available from the CDC.}  The testing data is for Nucleic Acid Amplification Tests (NAATs); this includes PCR tests through laboratories (including public, commercial, and hospital laboratories, among others) but does not include antibody or antigen tests.  A detailed discussion of how the CDC collects data from laboratories is available at \url{https://www.cdc.gov/coronavirus/2019-ncov/lab/reporting-lab-data.html}.  Similarly, confirmed cases and deaths originate from state and/or local health departments which we also collected from the CDC's county-level reports.  The date of confirmed cases corresponds to the day in which it was reported as a confirmed case.  The CDC provides the seven-day moving average for tests, confirmed cases, and deaths.  For many of the results below, we scale this by the county population and multiply by 1000 so that each county-level variable is per 1000 people in the county.  We also provide a number of results in terms of cumulative tests and confirmed cases which are derived from the seven-day moving average variables.  

The CDC suppresses some county-level values of confirmed cases and deaths in order to protect patient privacy.\footnote{The county-level CDC data that we use is derived from several underlying datasets.  For example, the data on confirmed cases builds on the CDC's Community Transmission Data (\url{https://data.cdc.gov/Public-Health-Surveillance/United-States-COVID-19-County-Level-of-Community-T/8396-v7yb}).  County-level confirmed cases are suppressed if, in the previous seven days, there have been more than zero new confirmed cases,  but less than 0.1 new cases per 1000 people in a county.}  When particular data is suppressed, it is for small, positive values of confirmed cases and/or deaths (i.e., we observe when the number of confirmed cases or deaths is equal to 0); and, for example, in our data, the smallest, non-zero observed seven day moving average of confirmed cases in a county is equal to 1.4.  For observations where the number of confirmed cases is suppressed, we set it equal to 1; and for observations where the number of deaths is suppressed, we set it equal to 0.5.  These choices result in the county-level data ``adding up'' to produce similar numbers of confirmed cases and deaths as are available at the state-level.  For the set of states and time periods that we consider, 19\% of county-level confirmed cases are suppressed and 43\% of deaths are suppressed; we observe full information about testing.  Finally, we merge the county-level testing and confirmed cases data with data from the Census Bureau on county-level population. 

\subsection{Estimation}

The identification results discussed in \Cref{sec:methodology} are constructive and suggest plug-in estimators of each parameter of interest.  In principle, a number of estimation procedures (e.g., regression, matching, or inverse probability weighting, among others) would be suitable for our proposed approach.  This section describes the particular doubly robust estimation procedure that we use in the application that comes from \citet{kang-schafer-2007}.\footnote{This approach is closely related to, but slightly different from, the more common augmented inverse propensity score weighting approach which is also doubly robust; for example, \citet{robins-rotnitzky-zhao-1994,scharfstein-rotnitzky-robins-1999,sloczynski-wooldridge-2018,callaway-li-2022b}.}  %
The first step of the estimation procedure is to estimate a propensity score model $p(z) := \P(D_l=1|Z_{lt^*-1}=z)$; for this step, we use logit and include county-level population and the seven day lags of cumulative tests, confirmed cases, and deaths per 1000 people in the county as covariates.  Given this estimate of the propensity score, we compute location-specific weights for untreated locations that are given by $w_l(Z_{lt^*-1}) = \hat{p}(Z_{lt^*-1})/(1-\hat{p}(Z_{lt^*-1}))$ where $\hat{p}(z)$ denotes the estimated propensity score; we also normalize the weights so that their average is equal to one (that is, the final weights we use are $w_l(Z_{lt^*-1})/\bar{w}$ where $\bar{w}$ is the sample average of the weights among untreated locations).  The second step is a regression adjustment step.  Given the weights that from the first step, we run a weighted regression of particular outcomes of interest on the same set of covariates as above using the set of untreated locations.  By using the first step weights, this step puts more weight on untreated locations that have pre-treatment characteristics that are relatively more common among treated locations.  The final step is to impute untreated potential outcomes for the treated group.  Given the estimated parameters from the previous step, untreated potential outcomes for the treated group can be imputed by calculating predicted values from the previous regression using the pre-treatment characteristics of treated locations.  We estimate $ATT$s by calculating the difference between average observed outcomes and average imputed untreated potential outcomes for the treated group.  One important advantage of this approach is that it is doubly robust in the sense that our estimates of $ATT$s are consistent if \textit{either} the propensity score model \textit{or} the outcome regression model is correctly specified.  A main reason that this is attractive in this context is that it sidesteps needing to estimate a full pandemic model in order to estimate effects of policies, particularly in the case where the propensity score model is correctly specified.

For the results below, we report daily estimates of treatment effects for various outcomes (e.g., tests, confirmed cases, and bounds on actual cases) from March 18 to May 9.  We set the policy implementation date to be April 1 (i.e., in the notation of the paper, we are setting $t^*$ to be April 1).  For dates after April 1, we condition on county-specific covariates from March 25 (7 days before).  Then, we report daily estimated ATTs through May 9, one week after Tennessee's open-testing policy ended and as Tennessee was adjusting to a more targeted testing strategy.  We also report ``pre-treatment'' estimates going back to March 18.  For dates before April 1, we condition on covariates seven days before that particular date; for example, for estimates on March 28, we condition on covariates from March 21.  We report results both for cumulative tests and confirmed cases and the seven day moving average of tests and confirmed cases.\footnote{Our main interest is in understanding the effect of the policy on the cumulative outcomes.  However, the results using the seven day moving average are more comparable to each other across dates and make it easier to see the timing of policy effects.}   One could make different choices for which date to set as the policy implementation date besides April 1, but there are tradeoffs here.  Using later dates makes it harder to find comparison counties with the same pandemic-related characteristics as treated counties because Tennessee's large increase in testing makes it harder to find comparison counties that had conducted as many tests.  Along these lines, later dates would tend to understate the full effects of Tennessee's expanded testing as well.  However, setting the date too early can involve inappropriately using comparison units just because the pandemic had not fully started yet (for example, in mid-March, many counties had not had any confirmed Covid-19 cases yet).  Ideally, we would like to use as early of a date as possible such that the pandemic has actually started in all counties.  Using April 1, to some extent, balances these tradeoffs.  In the Supplementary Appendix, we alternatively use March 25 and April 18 (the date when open-testing was implemented) and find broadly similar results.

In our application, the variation in the policy is at the state level, and it is common practice in empirical work to cluster at the level of the treatment.  However, this approach is made difficult in our setting because we have only a single treated state.  Instead, our approach is to cluster at the county level.  The disadvantage of this approach is that it is not robust to state-level ``common shocks.''  A main source of these common shocks would be other policies implemented by states to limit the spread of Covid-19 (e.g., stay-at-home orders or school closures), and, below, we carefully check the mix and timing of other state-level policies for Tennessee and its surrounding states.  Alternatively, our approach can be viewed as a conditional inference procedure where we condition on the state-level common shocks.  See Section 5.1 of \citet{roth-santanna-bilinski-poe-2022} for a discussion along these lines in a related context.  Finally, in the Supplementary Appendix, we also provide results using alternative strategies that include (i) a matching estimator along the lines of \citet{ho-imai-king-stuart-2007,abadie-spiess-2021}, (ii) varying the start date of the policy, and (iii) including the change in confirmed cases over time in pre-treatment periods as an additional covariate (including the change in covariates can be motivated by a pandemic model that includes location-specific unobserved heterogeneity in transmission rates).

\subsection{Challenges to Identification}

As is clear from \Cref{fig:state-tests-cases}, there are important differences between Tennessee and its surrounding states in terms of population and pandemic related characteristics.  These differences show up before Tennessee's expanded testing began which implies that that they are not due to the policy itself.  \Cref{tab:balance} reports summary statistics and covariate balance measures for the county-level data that we use in our main results.  The table provides information both for the raw data and after re-weighting using the weights coming from the propensity score (as described above) and on March 25 (which is seven days before the date we set as the policy implementation date).  In the raw data, on March 25, on average untreated counties had run about 10\% more tests than counties in Tennessee and had confirmed about 80\% more cases than counties in Tennessee.  There is a much bigger difference in terms of the average number of deaths in counties in Tennessee relative to untreated counties.\footnote{This large difference is mainly driven by high death rates in some counties in surrounding states early in the pandemic.  For example, the county with the highest death rate in our sample as of March 25 was Baker County, GA, where there had been 1.2 deaths per 1000 people (which is triple the rate of the second highest county in our sample).  Baker County is a rural county bordering Albany, GA which was one of the most notable Covid-19 hotspots in the entire country during the period that we consider (\citet{schrade-edwards-2020}).  On March 25, the highest death rates tend to be in rural counties (e.g., among the top 10 highest death rates, no county has a population over 25,000) and also tend to be geographically concentrated in Georgia (12 out of the top 20 county-level death rates) and Mississippi (5 of the top 20 death rates).}  Re-weighting untreated counties makes a notable difference; after re-weighting, the means of all variables are essentially identical.

Next, we briefly discuss the timing of other policy decisions made by Tennessee and the six comparison states.  We list the timing of implementing major policies across states in \Cref{tab:timing}.  The timing of other policies is important in this context because (i) states implemented a number of policies in response to the Covid-19 pandemic and (ii) if the policies themselves or the timing of implementing these policies differed substantially across Tennessee and its surrounding states, then our results would mix together the effects of Tennessee's testing policy with other policy differences between Tennessee and the six comparison states.

The timing of main policies across the states that we consider is, in general, very similar.  In particular, besides testing, the timing of Tennessee's policies were virtually identical to the timing in Alabama, Georgia, and Mississippi.  There are modest differences with Kentucky and North Carolina (though these differences are most notable for business closure policies and gathering restrictions which are the two least well-defined policies that we consider).  The most notable policy difference is that Arkansas did not implement a stay-at-home order; we show in the Supplementary Appendix that the results are not sensitive to excluding Arkansas from the comparison group.  These close similarities in terms of other policies across states provides one piece of evidence in favor of interpreting our results below as being due to Tennessee's expanded testing.

Finally, it is important to be clear that, although our identification arguments used terminology from the treatment effects literature such as ``treated and untreated potential outcomes,'' the effects that we estimate in this section are not relative to a ``no testing at all'' counterfactual.  Instead, we are estimating treatment effects of the policy that Tennessee implemented relative to a counterfactual policy where Tennessee's testing expanded similarly to its surrounding states (and implemented the same mix of other policies such as school closures and stay-at-home orders).

\subsection{Results} \label{sec:results}

\subsubsection*{Descriptive Bounds on the Per Capita Number of Actual Cases across Counties}

To start with, we compute bounds on the per capita number of actual Covid-19 cases across counties in Tennessee, and, for comparison, for counties in Alabama (which was arguably experiencing the most similar path of Covid-19 cases, tests, and policies among Tennessee's surrounding states).  These results are available in \Cref{fig:infection-bounds} in the Supplementary Appendix for March 31 and April 25.  The bounds are informative but still fairly wide.  To give an example, the lower bound for Davidson County (the county where Nashville is located) is that 0.41\% of county residents had had Covid-19 by April 25; the upper bound indicates that 9.8\% had had Covid-19 by the same date.  More generally, for both states, the bounds are somewhat narrower by April 25 than they were on March 31 -- this should not be surprising as the number of tests had increased substantially in both states over time.  In general, the bounds tend to be narrower for counties in Tennessee than for counties in Alabama.  We provide some additional discussion of these results as well as results across states in the Supplementary Appendix.

\subsubsection*{Main Results: Policy Effects of Tennessee's Expanded Testing}
  
This section considers the effect of Tennessee's expanded testing on observed outcomes including tests and confirmed cases as well bounds on actual cases.  First, we consider the effect of the policy on the number of tests and on the number of confirmed cases.  These results are available in \Cref{fig:tests-cases-es}.  We estimate that Tennessee's expanded testing policy increased the cumulative number of Covid-19 tests run in the state by May 9 by about 22 tests per 1000 people relative to counties with similar populations, tests, confirmed cases, and Covid-19 deaths in surrounding states prior to April 1.  This is a large increase; in particular, our estimate indicates the policy slightly more than doubled the number of tests relative to a counterfactual where testing in Tennessee followed a similar path as in surrounding states (we calculate this by dividing our estimate of the $ATT$ by the average untreated potential outcome for Tennessee which is available as a byproduct of our estimation strategy).  Moreover, in panel (b) of \Cref{fig:tests-cases-es}, it is clear that the main expansion of testing in Tennessee corresponds to its open-testing policy which began on April 18.

Next, we move to the effects of expanded testing on confirmed cases.  We estimate that Tennessee's policy decreased the cumulative number of confirmed cases by about 0.9 per 1000 people in Tennessee relative to what they would have been in the absence of the policy.  Once again, this is a large effect; this estimate is a decrease in cumulative confirmed cases by about 34\%.  The timing of the decrease in confirmed cases is also in line with the policy decreasing the number of actual Covid-19 cases.  In particular, it appears that there is some lag in testing expanding and lower confirmed cases.  For example, testing was expanding in Tennessee by the end of March, but confirmed cases only start to go down about a week into April and the largest decreases are later.  Similarly, the timing of the largest decreases in confirmed cases (during the last half of April and early May) roughly corresponds to the timing of Tennessee's major expansion of testing due to its open-testing policy.

We provide our estimated bounds on the effect of Tennessee's policy on actual Covid-19 cases in \Cref{fig:bounds}.\footnote{The top panel of the figure, that contains estimates for the upper bound, also includes a one-sided 90\% confidence interval for the upper bound.  This is essentially immediately available from our inference procedure for the number of confirmed cases due to the upper bound being a linear functional of the effect of the policy on confirmed cases.  The lower bound is a more complicated functional (see the expression for the lower bound in \Cref{prop:2}).  Given that (i) our primary interest is on inference regarding the upper bound, and (ii) the estimated bound is extremely negative so that sampling variance is likely to be small in magnitude relative to the value of the bound itself, we do not report a confidence interval for the lower bound.}  The lower bound amounts to a huge reduction in actual Covid-19 cases due to the policy; however, as discussed above, this lower bound occurs under the conditions that no untested individuals in Tennessee have had Covid-19, but that, in the absence of the policy, the fraction of untested individuals that have Covid-19 is the same as the fraction of tested individuals that have Covid-19.  This particular scenario seems unlikely and, as discussed in \Cref{rem:shorter-bounds}, the lower bound could be tightened under additional assumptions.  However, our primary interest is in the upper bound, and this is what we primarily focus on below.  The results in \Cref{fig:bounds} suggest that, after about April 6, Tennessee's expanded testing was reducing the number of Covid-19 cases in Tennessee.   From \Cref{prop:3}, in combination with the expression in \Cref{eqn:CcondT}, the upper bound is a scaled version of the difference in confirmed cases across counties in Tennessee relative to counties in the surrounding states with similar pre-policy characteristics.  Therefore, the top panel of \Cref{fig:bounds} is very similar to Panel (c) of \Cref{fig:tests-cases-es}.  We estimate that by May 9, Tennessee's expanded testing decreased actual cases by at least 1.18 per 1000 people (the corresponding one-sided 90\% confidence interval is 0.85 cases per 1000).  Even at the upper bound, this indicates a relatively large effect of Tennessee's testing policy.  These results suggest that Tennessee's policy led to fewer actual Covid-19 cases.  To get a sense of the magnitude of these effects, it is natural to compare this estimate to the (observed) cumulative number of confirmed cases in Tennessee on May 9 which was 1.75 per 1000.\footnote{For the previous outcomes such as tests and confirmed cases above, we additionally reported the the percentage change in the outcome due to the policy.  This is more challenging here because our bounds on the effect of the policy do not pin down what the average number of actual Covid-19 cases would have been in the absence of the policy.  In the Supplementary Appendix, we carry out an additional exercise where we back out the number of actual cases in the absence of the policy using a combination of (i) additional (potentially much) stronger assumptions and (ii) estimates of the overall infection rate in Tennessee across time from the IHME.  In this case, we estimate  that the policy reduced the actual number of Covid-19 cases in Tennessee by about 15\%; and although this involves a number of extra assumptions, it seems like a useful baseline for interpreting the magnitudes of the estimates here.}  At a minimum the results here suggest that (i) there is strong evidence that expanded testing did decrease the actual number of Covid-19 cases in Tennessee, and (ii) even the upper bound on the effect of the policy should be interpreted as a non-trivial reduction in Covid-19 cases.  At a higher level, that the upper bound is negative is driven by the fact that confirmed cases appear to have decreased in Tennessee due to the open-testing policy.  This decrease in confirmed cases in the presence of an increase in total tests is a strong piece of evidence that Tennessee's expanded testing did decrease the number of actual cases -- even if we are not able to provide plausible assumptions that lead to point identification.

\subsubsection*{Discussion}

The results in this section indicate that, even under weak assumptions on the number of Covid-19 cases among untested individuals, there is considerable evidence that Tennessee's expanded testing reduced actual Covid-19 cases relative to the number of cases that would have occurred without the expanded testing.  These results come with standard caveats in the context of policy evaluation that are worth mentioning here.  First, these results are local both in geography and in time.  For example, our results would not necessarily generalize to later in the pandemic when testing was more widely available.  It is also not immediately evident how alternative policies such as a much larger increase in testing would have affected the spread of Covid-19 over the first few months of the pandemic.  Second, our results cannot pin down the mechanism through which Tennessee's policy affected Covid-19 cases. It is not clear whether the effects of the policy are fully driven by the expansion of testing \textit{per se}.  Thus, it is not clear if, holding the number of tests fixed, allowing individuals to self-select into testing (as in Tennessee's policy) is more effective than other policies where testing would be targeted to exposed individuals or high-risk groups.%

Besides our main results in the paper, we also provide some additional evidence on the effectiveness of expanded testing in the Supplementary Appendix.  We show that Tennessee's expanded testing appears to have reduced the number of Covid-19 deaths --- this effect is point identified in our context and, in turn, suggests that the policy reduced the number of cases.  Similarly, although we do not observe county-level hospitalizations, at the state-level, Covid-19 hospitalizations appear to be reduced in Tennessee relative to other states being considered.  Finally, we provide additional estimates of the effect of the policy on trips to work in \Cref{fig:trips} in the Supplementary Appendix.  These results indicate that the number of trips to work was higher under the policy in the first week of May than it would have been in the absence of the policy.  This is arguably in line with our previous results and can be rationalized in a model where individuals increase their travel when their risk of becoming infected with Covid-19 is lower.

To conclude, we briefly consider the magnitude and policy relevance of our estimated effects.  By May 9, 241 people had died from Covid-19 in Tennessee.  At the upper bound on the effect of the policy (our most conservative estimate), we estimate that there were 8058 fewer actual cases in Tennessee than there would have been without the policy.  The Institute for Health Metrics and Evaluation (IHME) provides an estimate of the infection fatality rate in Tennessee on April 15 of 0.00988 (i.e., slightly less than 1\%).  Given this IFR, we would estimate that Tennessee's policy reduced the number of Covid-19 deaths by at least 80 in Tennessee (indicating about 25\% fewer deaths due to the policy).  In terms of a cost-benefit analysis, Tennessee expected to pay about \$100 per test (\citet{farmer-2020}).  Aggregated up to the whole state, we estimate that Tennessee ran close to 136,000 additional tests under the policy for a total cost of about \$14 million dollars.   This suggests an extremely high return on investment to expanding testing for Tennessee during the early part of the pandemic.  In fact, the costs of expanded testing were so low that essentially any reduction in the number of deaths due to the policy would justify its cost.

\FloatBarrier

\section{Conclusion} \label{sec:conclusion}

In this paper, we have proposed a new approach to evaluating the effects of Covid-19 related policies.  Our approach is particularly useful for evaluating policies early in the pandemic when testing was not widely available.  Our idea is to combine standard policy evaluation identifying assumptions with relatively mild assumptions to deal with actual Covid-19 cases in a particular location not being observed.  This strategy leads to bounds on the effect of the policy on actual Covid-19 cases and point identification of the effect of the policy on other observed outcomes.  The bounds on actual Covid-19 cases can be informative especially for policies that reduce the number of confirmed Covid-19 cases while not decreasing the number of Covid-19 tests.

We used this approach to study the effect of Tennessee's expanded testing early in the pandemic that made testing available to anyone who wanted a test.  %
Overall, our results indicate that, even under relatively weak assumptions, Tennessee's policy appears to have decreased the number of total and confirmed cases in Tennessee.  In this sense, it seems that Tennessee's policy had the intended effects.

{ \singlespacing 
\setstretch{1}
\setlength\bibitemsep{0pt}
\printbibliography}

\appendix

\section{Proofs} \label{sec:proofs}

\begin{proof}[\textbf{Proof of \Cref{prop:1}}]
  The result follows because
  \begin{align*}
    ATT_Y(Z_{lt^*-1}) &= \E[ Y_{lt^*}(1) | Z_{lt^*-1}, D_l=1] - \E[Y_{lt^*}(0) | Z_{lt^*-1}, D_l=1] \\
                      &= \E[ Y_{lt^*}(1) | Z_{lt^*-1}, D_l=1] - \E\Big[ \E[Y_{lt^*}(0) | Z^*_{lt^*-1}, D_{l} = 1 ] \big| Z_{lt^*-1}, D_l=1 \Big] \\
                      &= \E[ Y_{lt^*}(1) | Z_{lt^*-1}, D_l=1] - \E\Big[ \E[Y_{lt^*}(0) | Z^*_{lt^*-1}, D_{l} = 0 ] \big| Z_{lt^*-1}, D_l=1 \Big] \\
                      &= \E[ Y_{lt^*}(1) | Z_{lt^*-1}, D_l=1] - \E\Big[ \E[Y_{lt^*}(0) | Z^*_{lt^*-1}, D_{l} = 0 ] \big| Z_{lt^*-1}, D_l=0 \Big] \\
                      &= \E[ Y_{lt^*}(1) | Z_{lt^*-1}, D_l=1] -  \E[Y_{lt^*}(0) | Z_{lt^*-1}, D_l=0] \\
                      &= \E[ Y_{lt^*} | Z_{lt^*-1}, D_l=1] -  \E[Y_{lt^*} | Z_{lt^*-1}, D_l=0]
  \end{align*}
  which is the result.  The first equality is the definition of $ATT_Y(Z_{lt^*-1})$; the second equality holds by the law of iterated expectations (the outer expectation averages over the distribution of $C_{lt^*-1}(0)$ conditional on $Z_{lt^*-1}$ and $D_l=1$); the third equality holds by \Cref{ass:unc}; the fourth holds by \Cref{ass:mar}; the fifth equality holds by the law of iterated expectations; and the sixth equality holds because $Y_{lt^*}(1)$ is the observed outcome when $D_{l}=1$ and $Y_{lt^*}(0)$ is the observed outcome when $D_l=0$.  The result for $ATT_Y$ holds immediately by averaging over $ATT_Y(Z_{lt^*-1})$ over the distribution of $Z_{lt^*-1}$ conditional on $D_l=1$.
\end{proof}

\begin{proof}[\textbf{Proof of \Cref{prop:attc-decomp}}]
  First, recall that
  \begin{align*}
    ATT_C(Z_{lt^*-1}) &=  \E[C_{lt^*}(1) - C_{lt^*}(0) | Z_{lt^*-1}, D_l=1] \\
                    &= \E[C_{lt^*} | Z_{lt^*-1}, D_l=1]  - \E[C_{lt^*} | Z_{lt^*-1}, D_l=0]  \\
                    &= \P(C_{ilt^*}=1|Z_{lt^*-1}, D_l=1) - \P(C_{ilt^*}=1|Z_{lt^*-1},D_l=0) 
  \end{align*}
  where the second equality by using the same arguments as in the proof of \Cref{prop:1} and the third equality holds by the definition of $C_{lt^*}$.
  Omitting the dependence on $Z_{lt^*-1}$ for notational simplicity, and then plugging in from \Cref{eqn:covid-bound-1,eqn:CcondT} and the definitions of $\gamma_d(Z_{lt^*-1})$, $\phi_d(Z_{lt^*-1})$, and $\tau_d(Z_{lt^*-1})$ further implies that
  \begin{align} \label{eqn:covid-policy-bound}
    \P(C_{ilt^*}(1)=1|D_l=1) - \P(C_{ilt^*}(0)=1|D_l=1) &= \gamma_1 - \gamma_0 + \phi_1 (1-\tau_1) - \phi_0 (1-\tau_0)
  \end{align}
  which is the result.
  \end{proof}
  
  \begin{proof}[\textbf{Proof of \Cref{prop:2}}]
  Starting from \Cref{eqn:covid-policy-bound}, $\gamma_1$, $\gamma_0$, $\tau_1$, and $\tau_0$ are point identified but $\phi_1$ and $\phi_0$ are not (again we are omitting conditioning on $Z_{lt^*-1}$ to minimize notation).  Bounds on the effect of the policy on actual Covid-19 cases arise from restrictions on these terms.  In particular, \Cref{ass:covid-bound} says that, for $d \in \{0,1\}$,
  \begin{align*}
    0 \leq \phi_d \leq \P(C_{ilt^*}=1|T_{ilt^*}=1,D_l=d)
  \end{align*}
  $C^{B,U}_{lt^*}(Z_{lt^*-1})$, the upper bound in the proposition, comes from setting $\phi_1 = \P(C_{ilt^*}=1|T_{ilt^*}=1,D_l=1)$ (its maximum value under \Cref{ass:covid-bound}) and from setting $\phi_0=0$.  $C^{B,L}_{lt^*}(Z_{lt^*-1})$, the lower bound in the proposition, comes from setting $\phi_1 = 0$ and from setting $\phi_0 = \P(C_{ilt^*}=1|T_{ilt^*}=1,D_l=0)$ (its maximum value under \Cref{ass:covid-bound}).  In addition, the expressions provided in the proposition require noting that $\P(C_{ilt^*}=1|T_{ilt^*}=1,D_l=d) = \gamma_d/\tau_d$.
  The bounds on $ATT_C$ arise from averaging over the bounds for $ATT_C(Z_{lt^*-1})$ as discussed in the text.
\end{proof}

Next, we provide an auxiliary result that is useful for proving \Cref{prop:3}.

\begin{lemma} \singlespacing \label{lem:covid-policy-bound-2} Under \Cref{ass:covid-bound,ass:mar,ass:covid-unc,ass:pol-bound},
  \begin{align}
    \P(C_{ilt^*}=1, T_{ilt^*}=0 | Z_{lt^*-1},D_l=1) \leq \P(C_{ilt^*}=1,T_{ilt^*}=0|Z_{lt^*-1},D_l=0) 
  \end{align}
\end{lemma}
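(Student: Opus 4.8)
The plan is to recognize that the lemma is nothing more than the observed-data restatement of \Cref{ass:pol-bound}, with the passage between the two licensed by \Cref{ass:covid-unc,ass:mar}. The only real work is to rewrite each observed conditional probability in terms of potential outcomes and then align them with the inequality already assumed in \Cref{ass:pol-bound}. First I would apply the observation rule to each side. For treated locations the period-$t^*$ observed outcomes equal the treated potential outcomes, and since $Z_{lt^*-1}$ is pre-treatment it coincides with $Z_{lt^*-1}(0)$, so
\[
P(C_{ilt^*}=1, T_{ilt^*}=0 \mid Z_{lt^*-1}, D_l=1) = P(C_{ilt^*}(1)=1, T_{ilt^*}(1)=0 \mid Z_{lt^*-1}(0), D_l=1).
\]
For untreated locations the observed outcomes equal the untreated potential outcomes, so
\[
P(C_{ilt^*}=1, T_{ilt^*}=0 \mid Z_{lt^*-1}, D_l=0) = P(C_{ilt^*}(0)=1, T_{ilt^*}(0)=0 \mid Z_{lt^*-1}(0), D_l=0).
\]
\Cref{ass:pol-bound} bounds the first quantity by $P(C_{ilt^*}(0)=1, T_{ilt^*}(0)=0 \mid Z_{lt^*-1}(0), D_l=1)$, whose conditioning set already matches the lemma. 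Thus it suffices to show that this untreated-potential-outcome probability among the treated equals the same probability among the untreated, i.e. the right-hand side of the lemma.

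The key step, and the one I expect to be the main obstacle, is precisely this last equality, because it requires switching the conditioning event from $D_l=1$ to $D_l=0$. \Cref{ass:covid-unc} makes the untreated potential outcomes independent of $D_l$, but only after conditioning on $Z^*_{lt^*-1}(0)$, which contains the unobserved $C_{lt^*-1}(0)$ — exactly the variable we are not allowed to condition on in the statement. Conditional independence given the finer information $Z^*_{lt^*-1}(0)$ does not, by itself, transfer to the coarser $Z_{lt^*-1}(0)$, so a second ingredient is genuinely needed. To handle this I would condition down via the law of iterated expectations, taking the inner conditioning set to be $Z^*_{lt^*-1}(0)$: for each $d \in \{0,1\}$,
\[
P(C_{ilt^*}(0)=1, T_{ilt^*}(0)=0 \mid Z_{lt^*-1}(0), D_l=d) = E\Big[ P\big(C_{ilt^*}(0)=1, T_{ilt^*}(0)=0 \mid Z^*_{lt^*-1}(0), D_l=d\big) \,\Big|\, Z_{lt^*-1}(0), D_l=d \Big].
\]
By \Cref{ass:covid-unc} the inner probability is a functional of the untreated-potential-outcome distribution given $Z^*_{lt^*-1}(0)$ and hence does not depend on $d$; it is a common function of $Z^*_{lt^*-1}(0)$ alone. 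The two sides can then differ only through the outer average, that is, through the conditional distribution of $C_{lt^*-1}(0)$ given $Z_{lt^*-1}(0)$ and $D_l=d$.

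Finally, \Cref{ass:mar} states exactly that $C_{lt^*-1}(0) \independent D_l \mid Z_{lt^*-1}(0)$, so this conditional distribution is identical for $d=1$ and $d=0$. The two outer expectations therefore coincide, which gives the desired equality; combined with the inequality supplied by \Cref{ass:pol-bound}, this chains into
\[
P(C_{ilt^*}=1, T_{ilt^*}=0 \mid Z_{lt^*-1}, D_l=1) \leq P(C_{ilt^*}=1, T_{ilt^*}=0 \mid Z_{lt^*-1}, D_l=0),
\]
which is the lemma. One observation worth flagging is that \Cref{ass:covid-bound} plays no role in this particular argument — the inequality follows from \Cref{ass:pol-bound,ass:covid-unc,ass:mar} alone; \Cref{ass:covid-bound} is listed because it is a maintained hypothesis that becomes active only when this result is fed into \Cref{prop:3}.
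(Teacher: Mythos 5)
Your proposal is correct and follows essentially the same three-step chain as the paper's proof: rewrite the treated-side observed probability as a treated potential-outcome probability, apply \Cref{ass:pol-bound}, and then pass from $D_l=1$ to $D_l=0$ via \Cref{ass:covid-unc,ass:mar}. The paper simply asserts that last step, whereas you correctly spell out the iterated-expectations argument needed because \Cref{ass:covid-unc} conditions on the finer $Z^*_{lt^*-1}(0)$; your observation that \Cref{ass:covid-bound} is not actually used here is also accurate.
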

\begin{proof}
  To show the result (and omitting conditioning on $Z_{lt^*-1}$), notice that
  \begin{align*}
    \P(C_{ilt^*}=1,T_{ilt^*}=0|D_l=1) &= \P(C_{ilt^*}(1)=1,T_{ilt^*}(1)=0|D_l=1) \nonumber \\
                                    & \leq \P(C_{ilt^*}(0) = 1, T_{ilt^*}(0) = 0 | D_l=1) \nonumber\\
                                    &= \P(C_{ilt^*}=1,T_{ilt^*}=0|D_l=0) 
  \end{align*}
  where the first equality holds because treated potential outcomes are observed outcomes when $D_l=1$, the second line holds by \Cref{ass:pol-bound}, and third line holds by \Cref{ass:covid-unc,ass:mar}.
\end{proof}

\begin{proof}[\textbf{Proof of \Cref{prop:3}}]
    Continuing to omit conditioning on covariates to simplify the notation, first, notice that, for $d \in \{0,1\}$, $\phi_d(1-\tau_d) = \P(C_{ilt^*} = 1, T_{ilt^*}=0|D_l=d)$.  Thus, \Cref{lem:covid-policy-bound-2} implies that
    \begin{align} \label{eqn:phi1-bound}
        \phi_1(1-\tau_1) \leq \phi_0 (1-\tau_0)
    \end{align}

  Next, following the same logic as in the proof of \Cref{prop:2} (see \Cref{eqn:covid-policy-bound} in particular), the lower bound arises by making $\phi_1$ as small as possible while making $\phi_0$ as large as possible.  \Cref{ass:pol-bound} does not provide any identifying power for the lower bound though (see \Cref{eqn:phi1-bound}) so the lower bound remains unchanged.
    
    Under \Cref{ass:pol-bound}, from \Cref{eqn:phi1-bound}, it follows that $\phi_1(1-\tau_1) - \phi_0 (1-\tau_0) \leq 0$, and, plugging this into   \Cref{eqn:covid-policy-bound}, implies that
  \begin{align*}
    & \P(C_{ilt^*}(1) = 1| D_l=1) - \P(C_{ilt^*}(0) = 1 | D_l=1) \leq \gamma_1 - \gamma_0 
  \end{align*}
  which implies the result for the upper bound of $ATT_C(Z_{lt^*-1})$.  The result for $ATT_C$ holds by averaging over the $Z_{lt^*-1}$ in $ATT_C(Z_{lt^*-1})$.  
\end{proof}

\pagebreak

\section{Figures and Tables}

\begin{figure}[h]
    \centering
    \caption{Tests and Confirmed Cases by State}
    \label{fig:state-tests-cases}
    \begin{subfigure}{.49\textwidth}
        \includegraphics[width=\textwidth]{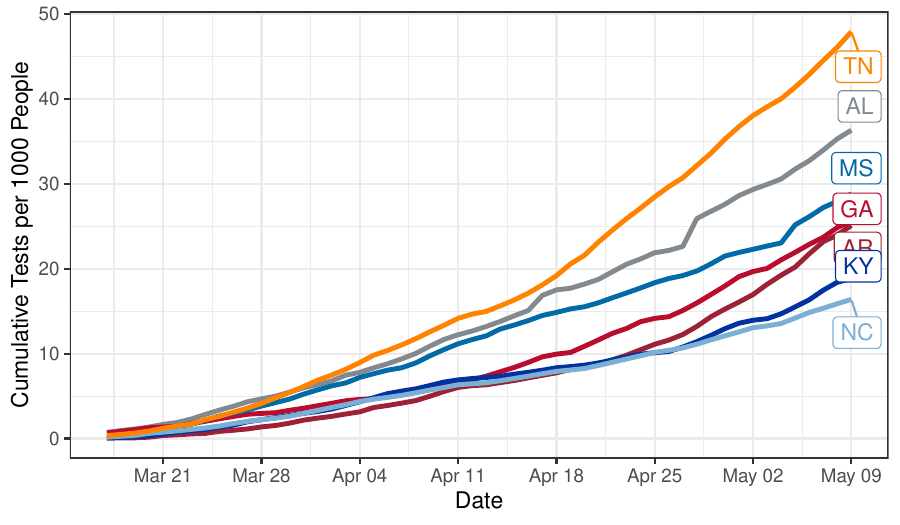}  
        \caption{Cumulative Tests}
    \end{subfigure}
    \begin{subfigure}{.49\textwidth}
        \includegraphics[width=\textwidth]{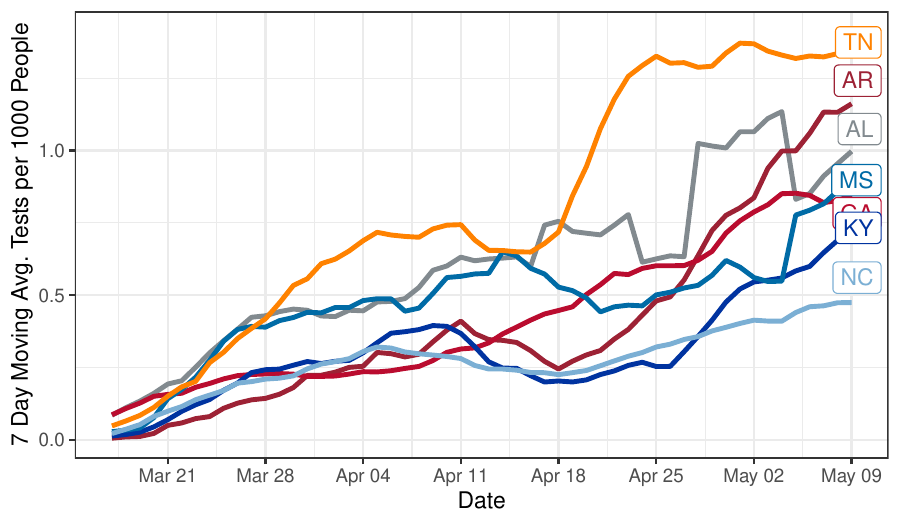}
        \caption{7 Day Moving Avg. Tests}
    \end{subfigure}
    \begin{subfigure}{.49\textwidth}
        \includegraphics[width=\textwidth]{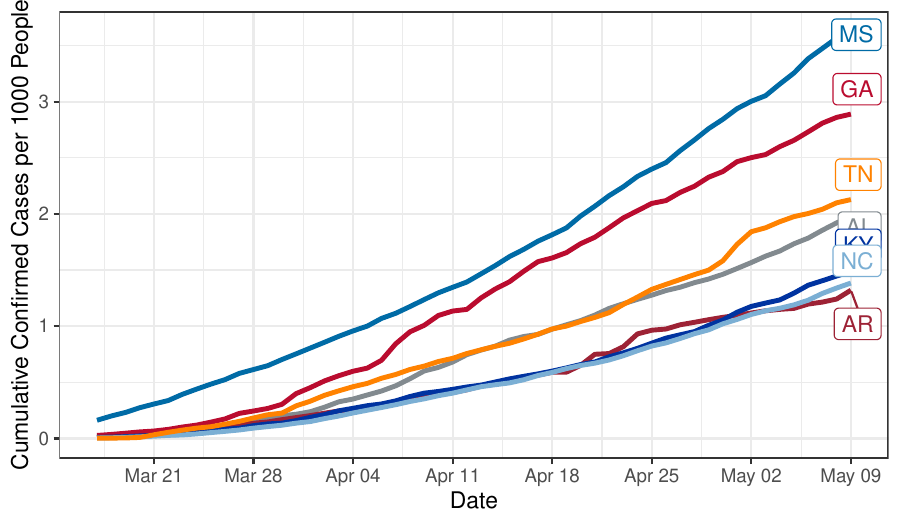}  
        \caption{Cumulative Confirmed Cases}
    \end{subfigure}
    \begin{subfigure}{.49\textwidth}
        \includegraphics[width=\textwidth]{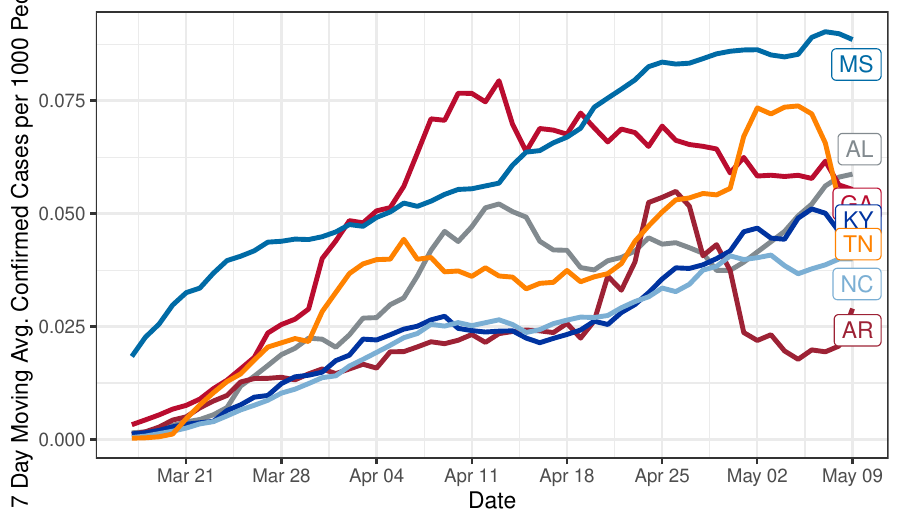}
        \caption{7 Day Moving Avg. Confirmed Cases}
    \end{subfigure}
    \begin{justify}
        { \footnotesize \textit{Notes:} The figure provides cumulative and seven day moving averages of tests and confirmed cases for Tennessee, Alabama, Arkansas, Georgia, Kentucky, Mississippi, and North Carolina from March 18 to May 9, 2020.
        
        \noindent \textit{Sources:} CDC COVID Data Tracker, \url{https://covid.cdc.gov/covid-data-tracker/#trends_newtestresultsreported}}
    \end{justify}
\end{figure}

\pagebreak

\renewcommand{\arraystretch}{1.3}
\begin{table}[t]
\centering
\caption{Covariate Balance}
\label{tab:balance}
\begin{tabular}{@{}rcccc@{}}
\toprule
                                                & Mean Treated         & Mean Untreated       & Std. Mean Diff       & Var Ratio            \\ \midrule
\multicolumn{1}{l}{{\ul \textbf{Raw Data}}}     & \multicolumn{1}{l}{} & \multicolumn{1}{l}{} & \multicolumn{1}{l}{} & \multicolumn{1}{l}{} \\
Tests                                           & 1.802                & 1.997                & -0.158               & 0.404                \\
Confirmed Cases                                 & 0.120                & 0.218                & -0.580               & 0.227                \\
Deaths                                          & 0.0002               & 0.0075               & -6.557               & 0.0003               \\
Log Population                                  & 10.52                & 10.28                & 0.234                & 0.907                \\ \midrule
\multicolumn{1}{l}{{\ul \textbf{Re-weighted Data}}} &                      &                      &                      &                      \\
Tests                                           & 1.802                & 1.799                & 0.002                & 0.582                \\
Confirmed Cases                                 & 0.120                & 0.120                & 0.002                & 0.800                \\
Deaths                                          & 0.0002               & 0.0002               & 0.006                & 0.550                \\
Log Population                                  & 10.52                & 10.53                & -0.006               & 0.805                \\ \bottomrule
\end{tabular}

\begin{justify}
{ \footnotesize \textit{Notes:} The table provides summary statistics and covariate balance measures for the underlying data and for the re-weighted data.  Besides county-level population, all reported statistics are cumulative values for each variable per 1000 people up to March 25 (which is seven days before April 1 --- the date we use as the implementation date of the policy).  The column labeled ``Std. Mean Diff'' reports the standardized difference in means between the treated and comparison groups (i.e., the difference between the means for each group divided by the standard deviation of the same variable for the treated group), and the column labeled ``Var Ratio'' reports the variance of that variable for the treated group divided by the variance for the untreated group. } 
\end{justify}
\end{table}

\pagebreak

\begin{table}[t]
\centering
\caption{Timing of Other Policies}
\label{tab:timing}
\small
\begin{tabular}{@{}rccccc}
\toprule
 & Emergency Dec. & Schools Closed & Stay-at-Home & Business Closure & Gathering Rest. \\ \midrule
TN Start & March 12 & March 16 & April 2 & April 1 & March 23 \\
TN End & - & - & May 1 & May 30 & May 1 \\
AL Start & March 13 & March 19 & April 4 & $^*$ & March 19$^c$ \\
AL End & - & - & April 30 & - & May 15 \\
AR Start & March 11 & March 16 & none & $^*$ & March 27 \\
AR End & - & - & - &  & May 11 \\
GA Start & March 14 & March 18 & April 3 & $^*$ & March 24 \\
GA End & - & - & May 1 & - & - \\
KY Start & March 6 & March 16 & March 26$^a$ & April 26 & March 19 \\
KY End & - & - & - & May 11$^b$ & - \\
MS Start & March 14 & March 19 & April 3 & April 3 & March 24 \\
MS End & - & - & April 27 & - & - \\
NC Start & March 10 & March 16 & March 30 & March 30 & March 12$^c$ \\
NC End & - & - & May 8$^b$ & - & - \\ \bottomrule
\end{tabular}
\begin{justify}
    {\footnotesize \textit{Notes:} The table reports the timing that various policies were implemented in the states considered in the paper.  The data comes from \citet{fullman-et-al-2021}, and we use their classification scheme.  In some cases, multiple versions of the same policy are reported for the same state over the period that we consider.  In these cases, we default to reporting the first state-wide mandated policy except where noted in the table.  The column labeled ``Emergency Dec.'' provides the date that the state declared an emergency; the column labeled ``Schools Closed'' records the date when schools were closed state-wide; the column labeled ``Stay-at-Home'' provides the date when a mandatory stay-at-home order was implemented; the column labeled ``Business Closure'' refers to non-essential business closures though other business restrictions were imposed in some states; and the column labeled ``Gathering Rest.'' provides the date when the state imposed some state-wide gathering restriction.  The additional notation in the table has the following meaning: ``$-$'' -- policy did not end before May 31, 2020, $^*$ -- several business closure policies but none were  classified as a non-essential business closure policy, $^a$ -- policy recommendation rather than mandate, $^b$ -- policy eased rather than removed, $^c$ -- multiple versions of this policy were enacted and the earliest date is reported in the table.}
\end{justify}
\end{table}

\pagebreak

\begin{figure}[t]
  \begin{center}
    \caption{Policy Effects on Tests and Confirmed Cases}
    \label{fig:tests-cases-es}
    \begin{subfigure}{.49\textwidth}
        \includegraphics[width=\textwidth]{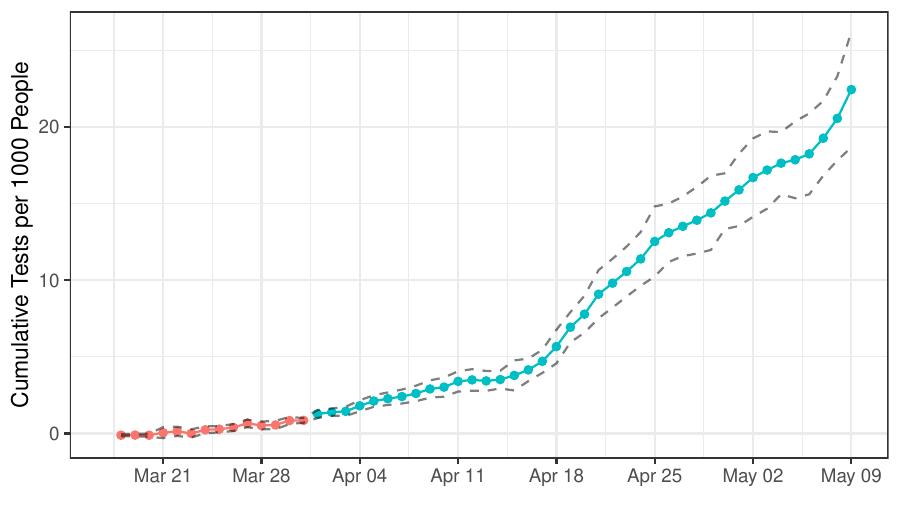}
        \caption{Cumulative Tests}
    \end{subfigure}
    \begin{subfigure}{.49\textwidth}
        \includegraphics[width=\textwidth]{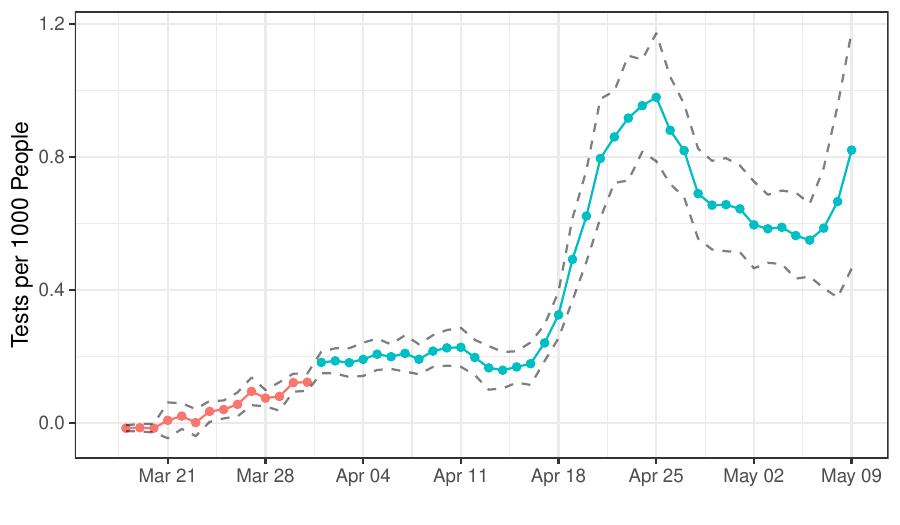}
        \caption{7 Day Moving Avg. Tests}
    \end{subfigure}
    \begin{subfigure}{.49\textwidth}
        \includegraphics[width=\textwidth]{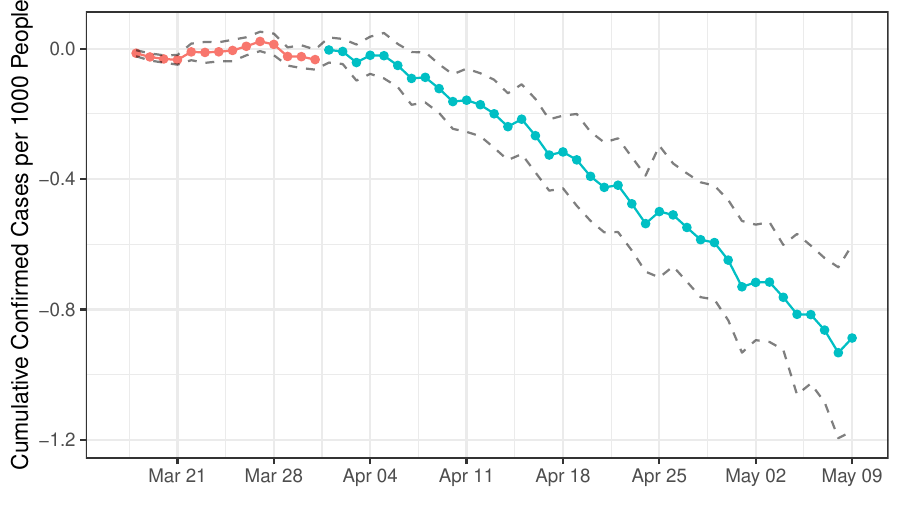}
        \caption{Cumulative Confirmed Cases}
    \end{subfigure}
    \begin{subfigure}{.49\textwidth}
        \includegraphics[width=\textwidth]{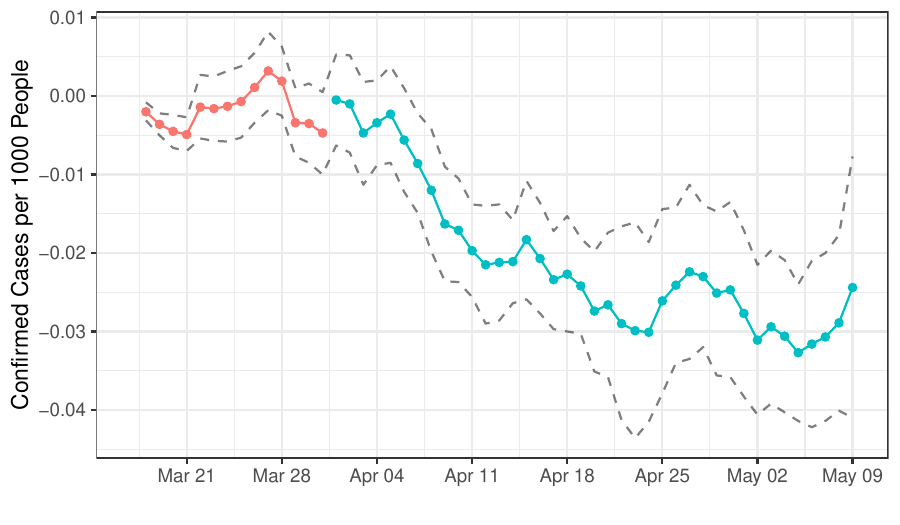}
        \caption{7 Day Moving Avg. Confirmed Cases}
    \end{subfigure}
  \end{center}
  \begin{justify}
        { \footnotesize \textit{Notes:} The figure provides estimates of the effects of Tennessee's expanded testing policy on cumulative tests per 1000 people, the seven day moving average of tests per 1000 people, cumulative confirmed cases per 1000 people, and the seven day moving average of confirmed cases per 1000 people using the approach described in the text.  The red points in the figure are estimates before April 1 while the blue points are for after April 1.  The dashed line provides a 90\% confidence interval.}
    \end{justify}
\end{figure}

\pagebreak

\begin{figure}[t]
    \centering
    \caption{Bounds on Policy Effects on Actual Covid-19 Cases}
    \label{fig:bounds}
    \includegraphics[width=.75\textwidth]{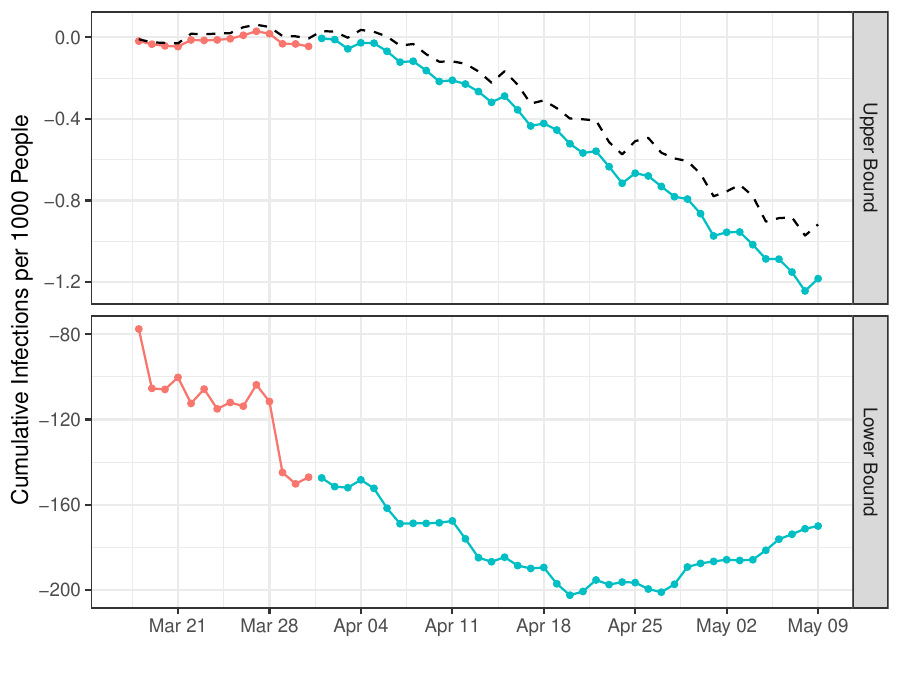}
    \begin{justify}
        { \footnotesize \textit{Notes:} The figure provides estimates of the bounds on the number of actual Covid-19 cases per 1000 people using the approach developed in the paper.  The upper bound is provided in the top panel and the lower bound is provided in the bottom panel.  Notice that the values on the y-axis change between the two panels.  The top panel also contains a one-sided 90\% confidence interval for the upper bound on the effect of the policy.}
    \end{justify}
\end{figure}

\end{document}